\documentclass{article}
\usepackage[utf8]{inputenc}
\usepackage[utf8]{inputenc}
\usepackage{geometry}
\geometry{verbose,tmargin=1in,bmargin=1in,lmargin=1in,rmargin=1in}
\usepackage{babel}
\usepackage{verbatim}
\usepackage{mathtools}
\usepackage{bm}
\usepackage{amsmath}
\usepackage{amssymb}
\usepackage{multicol}
\usepackage[ruled,noline,noend,linesnumbered]{algorithm2e}
\usepackage[dvipsnames,table,xcdraw]{xcolor}
\definecolor{linkColor}{HTML}{E74C3C}
\definecolor{pearcomp}{HTML}{B97E29}
\definecolor{citeColor}{HTML}{2980B9}
\definecolor{urlColor}{HTML}{1D2DEC}
\definecolor{conjColor}{HTML}{9ab569}
\usepackage[CJKbookmarks=true,
            bookmarksnumbered=true,
            bookmarksopen=true,
            colorlinks=true,
            citecolor=citeColor,
            linkcolor=linkColor,
            anchorcolor=red,
            urlcolor=urlColor,
            ]{hyperref}
\makeatletter


\usepackage{bm}
\usepackage{enumitem}
\usepackage{tikz}
\usetikzlibrary{calc,patterns,angles,quotes}

\tikzset{
    right angle quadrant/.code={
        \pgfmathsetmacro\quadranta{{1,1,-1,-1}[#1-1]}     
        \pgfmathsetmacro\quadrantb{{1,-1,-1,1}[#1-1]}},
    right angle quadrant=1, 
    right angle length/.code={\def\rightanglelength{#1}},   
    right angle length=2ex, 
    right angle symbol/.style n args={3}{
        insert path={
            let \p0 = ($(#1)!(#3)!(#2)$) in     
                let \p1 = ($(\p0)!\quadranta*\rightanglelength!(#3)$), 
                \p2 = ($(\p0)!\quadrantb*\rightanglelength!(#2)$) in 
                let \p3 = ($(\p1)+(\p2)-(\p0)$) in  
            (\p1) -- (\p3) -- (\p2)
        }
    }
}

\usepackage{babel}
\usepackage{subcaption}

\usepackage[authoryear,sort]{natbib}

\bibliographystyle{abbrvnat}

\usepackage{algorithmic}

\usepackage{amsthm}
\usepackage{bbm}
\usepackage{times}
\usepackage{lipsum}

\theoremstyle{plain}

\newtheorem{lemma}{\textbf{Lemma}}
\newtheorem{theorem}{\textbf{Theorem}}\setcounter{theorem}{0}

\newtheorem{assumption}{\textbf{Assumption}}

\newtheorem{proposition}{\textbf{Proposition}}\setcounter{theorem}{0}
\newtheorem{procedure_con}{\textbf{Procedure}}
\theoremstyle{definition}
\newtheorem{remark}{\textbf{Remark}}













































\DeclareMathOperator*{\argmax}{argmax}

\usepackage[authoryear,sort]{natbib} 

\title{Online Learning in a Creator Economy}

%

\author{Banghua Zhu, Sai Praneeth Karimireddy, Jiantao Jiao, Michael I. Jordan\thanks{Banghua Zhu, Sai Praneeth Karimireddy, Jiantao Jiao, Michael I. Jordan are with the Department of Electrical Engineering and Computer Sciences, University of California, Berkeley. Email: \{banghua, sp.karimireddy, jiantao, jordan\}@berkeley.edu.}}
\date{}

\begin{document}

\maketitle

\begin{abstract}
The creator economy has revolutionized the way individuals can profit through online platforms.   In this paper, we initiate the study of online learning in the creator economy by modeling the creator economy as a three-party game between the users, platform, and content creators, with the platform interacting with the content creator under a principal-agent model through contracts to encourage better content. Additionally, the platform interacts with the users to recommend new content, receive an evaluation, and ultimately profit from the content, which can be modeled as a recommender system.

Our study aims to explore how the platform can jointly optimize the contract and recommender system to maximize the utility in an online learning fashion. We primarily analyze and compare two families of contracts: return-based contracts and feature-based contracts. Return-based contracts pay the content creator a fraction of the reward the platform gains. In contrast, feature-based contracts pay the content creator based on the quality or features of the content, regardless of the reward the platform receives. We show that under smoothness assumptions, the joint optimization of return-based contracts and recommendation policy provides a regret $\Theta(T^{2/3})$. For the feature-based contract, we introduce a definition of intrinsic dimension $d$ to characterize the hardness of learning the contract and provide  an upper bound on the regret $\mathcal{O}(T^{(d+1)/(d+2)})$. The upper bound is tight for the linear family.

\end{abstract}

\section{Introduction}

The \emph{creator economy} refers to a rapidly growing online-platform-facilitated economy that brings together content creators and users, allowing creators  to earn revenue from their creations~\citep{banks2008labour, schram2020state, el2022quantifying, radionova2021creator, bhargava2022creator}. These platforms monetize the content created by content creators through various means, including paid audience partnerships, ad revenue, tipping platforms, and product sales provided by the users.

The creator economy can be viewed as  a  three-party game linking users, platform, and content creators. On the one hand, we can model the interactions between the platform and the content creator as a principal-agent relationship, focusing on the need to incentivize the production of high-quality content by the content creator. The platform 
would like to collect better content to enhance the desirability of the platform. The content creator wishes to gain profit on the platform from their content. The two sides develop agreements in the form of a \emph{contract}, which specifies how much the platform would pay under the different possible kinds of content. By learning the intent and interest of the content creator, the platform is able to identify a better way to incentivize participation and share the profit with the content creator.  The overall framework is \emph{contract theory}, which is a branch of the theory of incentives~\citep{grossman1992analysis, faure2001transaction, bolton2004contract, salanie2005economics}.

On the other hand, with the created content, the platform interacts with the users to recommend new content, and receives evaluation and revenue from the content, an interaction which can be modeled as a recommender system~\citep{resnick1997recommender, lu2015recommender, lu2012recommender, zhang2019deep, burke2002hybrid, jannach2010recommender, ricci2010introduction, schafer1999recommender}. The platform aims at designing the recommendation algorithms that are personalized for each type of user based on their interests and historical behaviors on the platform.  The goal is to maximize user satisfaction with the recommended content, which leads to an increase in total social welfare.

While much attention has been paid to the platform's interactions with the user, there has been very little focus on the design of contracts between the platform and the content creator. 
In this paper, we aim to fill this gap by initiating the study of  the joint design of contract and recommendation algorithms. We formulate the problem as a repeated game between the platform, users, and content creators. In each round, the platform proposes a new contract to all content creators. Based on the contract, the content creators are incentivized to generate content. The platform then recommends part of the content to each user and receives a corresponding reward. We aim at minimizing an appropriate notion of regret, in particular the difference in the utility between the chosen contract and recommendation policy with  the optimal contract and recommendation policy. 
  We establish a theory that helps understand the dynamics of the creator economy and sheds light on how the platform  can effectively incentivize content creators to produce better content while maximizing their own profits. By examining different types of contracts and their impact on the utility, our study provides valuable insights into the workings of a creator economy.

\subsection{Main Results}

We analyze regret bounds for the joint design of contract and recommendation algorithms. We focus on the tabular case where one estimates the reward for each user and the created content from the content creator separately. We analyze the two families of contracts:
\begin{itemize}
    \item \textbf{Return-based contract.} The payment to the content creator is proportional to the reward the platform receives from the user. 
    \item \textbf{Feature-based contract.} The payment to the content creator is based on the features capturing the quality of the content, regardless of the reward the platform receives.
\end{itemize}
We are interested in the regret for both contracts when combined with the design of the recommendation algorithm. Our main theoretical results are summarized below.
\begin{theorem}[Informal]
   Let $T$ be the number of total rounds. Under certain smoothness assumptions, the regret for joint optimization of return-based contract and recommendation policy is $\Theta(T^{2/3})$.
   
   On the other hand, the regret for joint optimization of a feature-based contract and recommendation policy is $\mathcal{O}(T^{(d+1)/(d+2)})$, where $d$ is the intrinsic dimension that depends on the complexity of the contract family.  The upper bound is tight when the contract family is linear.
\end{theorem}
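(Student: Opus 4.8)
The plan is to reduce both settings to the same template: online optimization of the platform's utility over a space of contracts, where the payoff of any fixed contract can only be accessed through the inner recommendation subproblem. First I would fix an arbitrary contract and analyze the induced subgame. Given the proposed contract, each content creator best-responds by producing the content that maximizes the creator's own utility, and, conditioned on that content, the platform faces a tabular reward-estimation and recommendation problem whose per-round error is controlled by a standard concentration argument. This lets me collapse the analysis to a single scalar objective $U(\cdot)$, the platform's utility as a function of the contract alone, and the role of the smoothness assumptions is precisely to guarantee that $U$ is Lipschitz (outside the small set of contracts at which the creator's best response switches). The online problem then becomes: maximize a Lipschitz function under noisy bandit feedback, layered on top of the recommendation estimation.

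For the return-based family the contract is a single sharing fraction, so the contract space is one-dimensional. I would obtain the upper bound by an $\epsilon$-net argument: discretize the interval into $N \asymp 1/\epsilon$ candidate fractions, treat each as an arm whose reward estimate already incorporates the inner recommendation estimate, and run a UCB-type algorithm over these arms. Discretization contributes bias $\mathcal{O}(\epsilon T)$ and the bandit over $N$ arms contributes exploration regret $\widetilde{\mathcal{O}}(\sqrt{NT})$; balancing $\sqrt{T/\epsilon} \asymp \epsilon T$ gives $\epsilon \asymp T^{-1/3}$ and total regret $\widetilde{\mathcal{O}}(T^{2/3})$. For the matching $\Omega(T^{2/3})$ lower bound I would use the canonical continuum-armed-bandit construction: partition the interval into $\Theta(1/\epsilon)$ cells, plant a utility bump of height $\epsilon$ in one unknown cell, and invoke a Le Cam / KL argument showing that locating and exploiting the bump is impossible unless $T \gtrsim \epsilon^{-3}$, which at the optimal scale $\epsilon \asymp T^{-1/3}$ forces regret $\Omega(T^{2/3})$. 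The extra work here, relative to a textbook Lipschitz bandit, is verifying that the planted utility curves are \emph{realizable} as genuine principal-agent utilities arising from valid creator incentives.

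For the feature-based family I would first define the intrinsic dimension $d$ as the covering (metric) dimension of the effective contract space under the metric in which $U$ is Lipschitz, so that an $\epsilon$-net has size $\Theta(\epsilon^{-d})$. The same net-plus-bandit scheme with $N \asymp \epsilon^{-d}$ arms then yields bias $\mathcal{O}(\epsilon T)$ and bandit regret $\widetilde{\mathcal{O}}(\sqrt{\epsilon^{-d}T})$; balancing $\epsilon T \asymp \sqrt{\epsilon^{-d}T}$ gives $\epsilon \asymp T^{-1/(d+2)}$ and hence $\widetilde{\mathcal{O}}(T^{(d+1)/(d+2)})$, which recovers $T^{2/3}$ at $d=1$. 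For the linear family I would compute $d$ explicitly from the dimension of the feature weights and establish the matching lower bound by planting $d$-dimensional bumps inside a hard sub-family of linear contracts, reusing the realizability verification from the return-based case.

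The hard part will be controlling the coupling between contract and recommendation together with the discontinuity of the creator's best-response map. Because a small change in the contract can cause the creator to switch to entirely different content, $U$ need not be globally Lipschitz, so I must argue that these jumps are benign --- that $U$ is piecewise smooth and the discretization error remains $\mathcal{O}(\epsilon)$ --- and I must show that the inner recommendation-estimation error composes with the outer bandit regret without degrading the exponent. Defining the intrinsic dimension so that it simultaneously captures the contract family's complexity and certifies the Lipschitz property of $U$ is the crux of the feature-based upper bound, and confirming that the bump instances correspond to admissible incentive structures is the crux of both lower bounds.
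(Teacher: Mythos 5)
Your discretize-then-UCB skeleton, the bias/exploration balance ($\epsilon \asymp T^{-1/3}$ for the scalar contract, $\epsilon \asymp T^{-1/(d+2)}$ for a class with $\epsilon$-covering number $\epsilon^{-d}$), and the reduction of the lower bound to a continuum-armed-bandit bump construction all match the paper's argument. The paper likewise keeps per-$(j,k,\text{contract})$ UCB reward estimates and picks the contract and the top-$S$ recommendation jointly by maximizing the optimistic utility, and verifies exactly as you anticipate that the inner recommendation layer only contributes a $\sqrt{KMS}$ prefactor, not a change in the exponent of $T$. The paper's intrinsic dimension is the metric entropy $\sup_\epsilon \log|\mathcal{G}_\epsilon|/\log(1/\epsilon)$ of the contract class under $\|\cdot\|_\infty$, which is essentially your covering-dimension definition once it is chained with the Lipschitz assumptions on reward, contract, and content generation.

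The genuine gap is in how you model the creator. You posit that each creator \emph{best-responds} to the contract and then hope to show that the induced utility $U(\cdot)$ is piecewise Lipschitz with benign jumps. The paper's main theorems do not do this: they simply \emph{assume} the content-generation map is a deterministic Lipschitz function of the contract (Assumptions~\ref{asm:alpha_lipschitz} and~\ref{asm:lipschitz_g}), which makes $U$ globally Lipschitz by construction and removes the discontinuity issue entirely. This is not a cosmetic difference. In the paper's Section~\ref{sec:full}, where creators genuinely best-respond (and full recommendation is imposed to make the best response well defined), the utility is \emph{not} Lipschitz, and the conclusions change: the return-based rate $\Theta(T^{2/3})$ survives only via a monotonicity lemma specific to linear contracts (the creator's induced expected reward is nondecreasing in $\alpha$, giving one-sided continuity), while the feature-based rate degrades to $\mathcal{O}(T^{2d/(2d+1)})$ with an $\Omega((1/\epsilon)^d)$ sample lower bound, obtained via a directional spherical-code discretization rather than a plain $\epsilon$-net. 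So your claim that the best-response discontinuities keep the discretization error at $\mathcal{O}(\epsilon)$ is false in general for the feature-based family, and your proposal as written cannot deliver $T^{(d+1)/(d+2)}$ under best-response semantics. Also note that the best response is not even well defined under partial recommendation ($S<K$), since each creator's payoff depends on the other creators' content through $\pi$; the paper flags this as the reason for adopting the deterministic-Lipschitz model in the first place. To repair the proposal, either adopt the smoothness assumption on $c_k(\cdot)$ directly (as the paper does), or restrict to the full-recommendation setting and accept the different rates.
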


Based on our theoretical analysis, we are able to draw the following conclusions:
\begin{itemize}
\item Both the recommendation policy and the contract design play an important role in the utility of the platform. Learning and adapting to the behavior of content creators is important to maximize the overall utility of the platform.
\item The optimal feature-based contract requires more samples to learn     compared to the return-based contract. However, when the feature-based contract family is well-designed and the content creators always maximize their own utility, the feature-based contract can achieve a larger optimal utility in the full recommendation setting than that of the return-based contract.  
\end{itemize}

\subsection{Related Work}
\paragraph{Contract Theory.}
Contract theory has been a subject of study for several decades~\cite{grossman1992analysis, faure2001transaction, bolton2004contract, salanie2005economics}, with a vast body of literature exploring its connections to information design~\citep[e.g.,][]{kamenica2011bayesian} and mechanism design~\citep[e.g.,][]{hurwicz2006designing}. The most relevant study to our paper is the online learning aspect of contract design. \citet{ho2016adaptive} initiated the study of learning the optimal contract design in an online fashion by formulating the problem as a repeated principal-agent model and applying an adaptive zooming algorithm in a bandit setting to obtain regret bounds. They provided upper bounds on the regret under the assumption of monotone contract and  first-order stochastic dominance among the probabilities. \citet{cohen2022learning} extended the analysis to  the case when the utility of the agent has bounded risk-aversion. \citet{zhu2022sample} pinned down the sample complexity (and regret) for linear contracts and provided a nearly tight bound for general contracts without extra assumptions.

Our definitions of return-based contract and feature-based contract are closely related to the linear contract and general contract studied in the above literature~\citep{ho2016adaptive, zhu2022sample}. The return-based contract is exactly the linear contract, which pays the agent a fraction of the reward received by the principal. Our feature-based contract can be viewed as a generalization of the general contract, where the latter specifies the payment for all possible discrete outcomes, and the former is an arbitrary function of the (possibly continuous) outcomes.

Beyond the learning perspective, 
there has been interest in exploring linkages between statistical inference and contract theory~\citep{schorfheide2012use, schorfheide2016hold, frazier2014incentivizing, bates2022principal}. The computation and approximation aspects have also been actively explored in recent years~\citep{babaioff2006combinatorial, dutting2021complexity, guruganesh2021contracts, castiglioni2022designing, dutting2019simple, castiglioni2021bayesian, carroll20152014}.   

\paragraph{Recommender systems and content creators in a creator economy.}
 There is a large literature on recommender systems~\citep{resnick1997recommender, lu2015recommender, lu2012recommender, zhang2019deep, burke2002hybrid, jannach2010recommender, ricci2010introduction, schafer1999recommender}, with  two main streams of system design---collaborative filtering~\citep{breese2013empirical} and content-based filtering~\citep{aggarwal2016recommender}. In collaborative filtering, the system recommends the same item to users who are linked in some manner. In content-based filtering, the system recommends items similar to those that a user liked in the past. In our paper, we consider the setting where the users are pre-grouped into different types, and assume that users with the same type share the same preferences, similar to the spirit of collaborative filtering. We also note that there is a recent line of work that studies the strategic behavior of content creators in a competitive model and characterizing equilibria~\citep{qian2022digital, jagadeesan2022supply, yao2023bad}.

\paragraph{Continuum-armed bandit and performative prediction.}
Our work is closely related to the continuum-armed bandit problem~\citep{kleinberg2008multi, kleinberg2013bandits, lattimore2020bandit} and performative prediction problem~\citet{hardt2016strategic, zrnic2021leads, liu2016bandit}. In the continuum-armed bandit, one first discretizes the choices of actions and reduces the problem to a finite-armed bandit problem. For the one-dimensional continuum-armed bandit problem, the optimal regret is $\Theta(T^{2/3})$, which can be achieved by an upper confidence bound algorithm after discretization. In the current paper we show that the extra learning of optimal recommendation policy does not affect the exponent in $T$ in this paper, and generalize this insight to the case of feature-based contract, where the content generation function is assumed to lie in a Lipschitz function family.

The literature of strategic classification and performative prediction also includes smoothness assumption in the utility function~\citep{hardt2016strategic, dong2018strategic,chen2019grinding, zrnic2021leads, liu2016bandit, harris2022strategic, yu2022strategic, perdomo2020performative}.
\citet{jagadeesan2022regret} study the problem of regret minimization in performative prediction. Our formulation goes beyond this work in that it requires the joint optimization of both the contract design and recommendation algorithm.

\section{Problem Formulation}

\subsection{Notation}
We use $\mathbb{B}^d\subset [0,1]^d$ to denote the $d$-dimensional unit ball.
We use $[K]$ to denote the set $\{1,\cdots, K\}$. We let $\mathsf{Bern}(p)$ be the Bernoulli distribution  with parameter $p$.  We use $\mathsf{TV}$ to denote the total variation distance between two distributions,  and let ${\mathsf{KL}}(\mathbb{P}_1, \mathbb{P}_2) = \int_{\mathcal{X}}  \log(d\mathbb{P}_1/d\mathbb{P}_2)d \mathbb{P}_1$ be the $\mathsf{KL}$ divergence between $\mathbb{P}_1$ and $\mathbb{P}_2$ when $\mathbb{P}_1$ is absolutely continuous with respect to $\mathbb{P}_2$.  We write $f(x) = \mathcal{O}(g(x))$ or $f(x)\lesssim g(x)$ for $x\in A$  if   there exists some positive real number $C$   such that $f(x)\leq C g(x)$ for all $x
\in A$. We write $f(x) = \Omega(g(x))$ or $f(x)\gtrsim g(x)$ for $x\in A$ if   there exists some positive real number $C$   such that $f(x)\geq C g(x)$ for all $x\in A$. We write $f(x) = \Theta(g(x))$ or $f(x)\asymp g(x)$ for $x\in A$  if we have both $f(x) = \mathcal{O}(g(x))$  and $f(x) = \Omega(g(x))$ for $x\in A$. We  use $\widetilde{\Theta}$ to denote the same order up to logarithmic factors. 
\subsection{Single-round interaction}
We begin with a single-round interaction procedure.  The platform (principal) first announces a contract  that specifies the payment to the content creators. Based on the contract, the content creators are incentivized to create content of various qualities. After seeing the contents, the platform will recommend a fraction of them to each user, and gain profits from them. 
The platform aims at designing a good contract and recommendation policy  to encourage better content and to improve its own utility. We formally define these concepts as follows.
\begin{itemize}
    \item {\it Content:}
We assume that one item of content is generated from each of the $K$ content creators. Each item is represented by a vector $c\in\mathcal{C}\subset \mathbb{B}^d$ representing different dimensions of qualities and features of the content.

\item {\it User}: Assume that there are in total $M$ users. Each user is represented by a type $j\in [J]$. In each round, each user will be recommended $S$ items with $S\leq K$.  For each of the recommended items $c\in\mathcal{C}$, the user with type $j$ will provide a random variable $R(j, c)$, in the form of ratings or tips, as the  reward for   the platform. We assume that the reward is always bounded between $0$ and $1$, i.e. $R(j, c)\in[0,  1]$ for any $j\in[J], c\in\mathcal{C}$.  The total reward for the $k$-th content creator can be informally written as $\sum_{j=1}^M R(j, c_k)\cdot \mathbbm{1}(c_k \text{ is recommended to user $j$})$.

\item {\it Recommendation}: Given a set of items $\{c_{k}\}_{k=1}^K$ and a set of users $\{j\}_{j=1}^M$, we let  the  recommendation policy $\pi:[M]\times [K]\times  \mathcal{C}^K\mapsto \{0, 1\}$ be the indicator of whether the content is recommended to the user. A valid recommender policy must satisfy $\pi\in\Pi=\{\pi \mid \sum_{k=1}^K \pi(j, k, \{c_{k'}\}_{k'=1}^K)=S, \forall k \in[K], j\in[M]\}$. The total reward for the $k$-th content creator can  be formally written as $ \sum_{j=1}^M R(j, c_k)\cdot\pi(j, k, \{c_{k'}\}_{k'=1}^K)$. The platform observes each reward $R(j, c_k)$
if the $k$-th content is recommended to the $j$-th user ($\pi(j, k, \{c_{k'}\}_{k'=1}^K) = 1$). When the family  of contents   $\{c_{k'}\}_{k'=1}^K $ is clear from the context, we abbreviate the recommendation policy as $\pi(j, k)$. 

\item {\it Contract}: The  platform proposes a contract $f:\mathcal{C}\times\mathbb{R}\mapsto [0, 1]$  that specifies the payment for the content and reward. For any content $c\in\mathcal{C}$, any reward $r\in\mathbb{R}$ gained by the content, $f(c, r)$ is the payment to the  corresponding content creator based on the content $c$  and the reward $r$ she has produced.  

\item {\it Creation of Content: } The content created is affected by the choice of contract provided by the platform. We  assume that the content generated from the $k$-th content creator is a deterministic function of the provided contract $c_k(f)$. We abbreviate the content as $c_k$ if the contract is clear from  the context.

\end{itemize}

The principal would like to design the contract $f$ and the recommender algorithm $\pi$ such that its own expected utility is maximized. The utility of the principal is defined as
\begin{align}
     u(f, \pi) & = \mathbb{E}\left[\sum_{k=1}^K  R_{\mathsf{total}}(c_k(f), \pi) - f(c_k(f), R_{\mathsf{total}}(c_k(f), \pi))\right],\\
     \text{where } R_{\mathsf{total}}(c_k(f), \pi) & = \sum_{j=1}^M R(j, c_k(f))\cdot\pi(j, k, \{c_{k'}(f)\}_{k'=1}^K).
\end{align}
The optimal contract and recommendation policy can  be found by
\begin{align*}
    (f^\star, \pi^\star) = \argmax_{f\in\mathcal{F}, \pi\in\Pi} u(f,\pi).
\end{align*}

Consider two families of contracts:
\begin{itemize}

    \item Return-based  contract: $\mathcal{F} = \{f(c, r) = \alpha \cdot r\mid \alpha \in[0,1]\}$, which ignores the identity of the content creator and the quality of the content, and directly pays the content creator an $\alpha$ fraction of the  return gained from the content.
    \item Feature-based contract: $\mathcal{F} = \{f(c, r) = g(c) \mid g\in\mathcal{G}\}$, which ignores the total reward and pays the content creator according to the featural representation of the content.
\end{itemize}

In the case of the return-based contract, the utility can be written as
\begin{align*}
    u_{\mathsf{r}}(\alpha, \pi) &= (1-\alpha) \mathbb{E}\left[\sum_{k=1}^K  \sum_{j=1}^M R(j, c_k(\alpha))\cdot\pi(j, k, \{c_{k'}(\alpha)\}_{k'=1}^K)\right]. 
\end{align*}

In the case of the feature-based contract, the utility can be written as
\begin{align*}
u_{\mathsf{f}}(g, \pi) &  = \mathbb{E}\left[\sum_{k=1}^K \left(\left(\sum_{j=1}^M R(j, c_k(g))\cdot\pi(j, k, \{c_{k'}(g)\}_{k'=1}^K)\right)-g(c_k(g))\right)\right]. 
\end{align*}

Our definition of the content creation process is different from the traditional definition in contract theory~\citep{grossman1992analysis, faure2001transaction, bolton2004contract, salanie2005economics}. In traditional contract theory, the generated content $c_k$ is the  content (or distribution of content) that maximizes the utility of the content creator, which is defined as the expected gain from the content subtracting the cost of producing the content. However, with the recommendation policy, the utility of the content creator depends not only on her own product but also on the content created by other creators. This introduces an extra layer of complication when defining the responses. Thus we  assume that the created content is a deterministic function of the proposed contract, and leave the randomness in the observed reward. We provide more discussion in Section~\ref{sec:full}. 

\subsection{Multi-round interaction}
Consider a repeated interaction procedure where the content creator produces new content, and the  platform recommends the content to the user in each round. 
The platform aims at designing a contract with the content creator in each round to encourage better content. The whole procedure  is summarized in Procedure~\ref{procedure.contract}. 
\begin{procedure_con}\label{procedure.contract}
Procedure at round $t$:
\begin{enumerate}
    \item The platform announces a  contract $f^{(t)}$ from the set $\mathcal{F}$ based on the prior information 
    \begin{align*}
    \mathcal{H}^{(t-1)}& =\Big(f^{(1)},  \{c_{k}^{(1)}\}_{k=1}^K, \pi^{(1)}, \{R^{(1)}(j, c_k)\cdot\pi^{(1)}(j, k, \{c_{k'}\}_{k'=1}^K)\}_{j\in[J], k\in[K]}, \cdots, \\ 
   &  \quad f^{(t-1)},  \{c_{k}^{(t-1)}\}_{k=1}^K, \pi^{(t-1)},   \{R^{(t-1)}(j, c_k)\cdot\pi^{(t-1)}(j, k, \{c_{k'}\}_{k'=1}^K)\}_{j\in[J], k\in[K]}\Big).
    \end{align*}
    \item Each content creator generates a content item   $ c_k^{(t)} = c_k(f^{(t)}).$
    \item After observing $\{  c_k^{(t)}\}_{k\in[K]}$, the platform determines a recommendation policy $\pi^{(t)}\in \Pi$ and recommends content  $c_k^{(t)}$ to user $j$ when $\pi^{(t)}(j, c_k^{(t)}) = 1$. 
\item The platform observes the individual reward for recommended content $R(j, c_k^{(t)})$ if $\pi^{(t)}(j, c_k^{(t)}, \{c_{k'}^{(t)}\}_{k'\in[K]}) = 1$, and pays the $k$-th content creator  $f^{(t)}(c_k^{(t)}, R_{\mathsf{total}}(c_k^{(t)}, \pi^{(t)}))$.
\end{enumerate}
\end{procedure_con}

We are interested in minimizing the following regret of utility for the platform, based on either the return-based contract or feature-based contract:
\begin{align*}
    \mathsf{Regret}_{\mathsf{r}}(T) & =\sum_{t=1}^T u_{\mathsf{r}}(\alpha^\star, \pi^\star) - u_{\mathsf{r}}(\alpha_t, \pi_t), \\ 
    \mathsf{Regret}_{\mathsf{f}}(T) & = \sum_{t=1}^T u_{\mathsf{f}}(g^\star, \pi^\star) - u_{\mathsf{f}}(g_{t}, \pi_t).
\end{align*}

\section{Joint Design of Contract and Recommendation Policy}
We turn to an analysis of the regret for both return-based and feature-based contracts. 
\subsection{Return-based contract}
For a return-based contract, the contract can be represented as a scalar $\alpha\in[0, 1]$. 
We make the following assumption on the smoothness of the reward function with respect to the contract. 
\begin{assumption}\label{asm:alpha_lipschitz}
Assume that the reward and  content generation functions satisfy the following smoothness condition, for any $k\in[K]$ and any $\alpha, \alpha'\in[0, 1]$:
\begin{align*}
     \mathbb{E}[R(j, c_k(\alpha))] - \mathbb{E}[R(j, c_k(\alpha'))] \leq L|\alpha - \alpha'|.
\end{align*}
\end{assumption}
As a sufficient condition, 
such assumption holds when both $c_k$ and $R(j, \cdot)$ are Lipschitz.   
With the Lipschitzness of the expected reward,  
we show that an Upper-Confidence-Bound (UCB) based  algorithm achieves nearly optimal regret for a return-based contract.

\begin{algorithm}[!htbp]	\caption{Online Learning with Return-based Contract}
	\label{alg:return}
	\textbf{Input:} The total rounds $T$.\\ 
	Set $\epsilon = T^{-1/3}$.  Define the uniformly discretized set of parameter $\mathcal{S}_\epsilon = \{0, \epsilon, 2\epsilon, \cdots, 1-\epsilon \}$. Set $\hat r^{(0)}(j, c_k(\alpha)) =1$ for all $j\in[M]$,   $k\in[K]$, $\alpha\in\mathcal{S}_\epsilon$. \\
\textbf{For} $t\in\{0, 1,2,\cdots, T\}$: \\
\quad \textbf{If } $t< 1/\epsilon$: \\
\qquad Select the contract $\alpha_t = \epsilon t$, observe  $c_k(\alpha_t)$ for all $k\in[K]$, take arbitrary recommendation policy.\\
\quad \textbf{Else}: \\
\qquad Set 
\begin{align*}
    \hat u_{\mathsf{r}}^{(t)}(\alpha, \pi) &=  (1-\alpha)  \sum_{k=1}^K  \sum_{j=1}^M \hat r^{(t)}(j, c_k(\alpha))\cdot\pi(j, k, \{c_{k'}(\alpha)\}_{k'=1}^K).
\end{align*} \\
 \qquad Select the contract  and the recommendation policy as  $$\alpha_t, \pi_t = \mathsf{argmax}_{\alpha\in\mathcal{S}_\epsilon, \pi\in \Pi} \hat u_\mathsf{r}^{(t)}(\alpha, \pi).$$

\quad Observe the generated content $c_k(\alpha_t)$ and the reward  $R^{(t)}(j, c_k(\alpha_t))$  for the recommended items with $\pi_t(j, k)=1$. Update the estimate $\hat r^{(t+1)}(j, c_k(\alpha))$  with the UCB-based estimator
\begin{align*}
\hat   r^{(t+1)}(j, c_k(\alpha)) &= \begin{cases}
1, \qquad \qquad  \qquad  \qquad  \qquad  \qquad  \qquad    \text{if }  \sum_{s=1}^{t} \mathbbm{1}(\pi_s(j, k)=1,  \alpha_s = \alpha) = 0, \\
\frac{\sum_{s=1}^{t} \mathbbm{1}(\pi_s(j, k)=1,  \alpha_s = \alpha) \cdot R^{(s)}(j, c_k(\alpha_s)) }{ \sum_{s=1}^{t} \mathbbm{1}(\pi_s(j, k)=1,  \alpha_s = \alpha)} + \sqrt{\frac{2\log(MKT/\epsilon\delta)}{\sum_{s=1}^{t} \mathbbm{1}(\pi_s(j, k)=1,  \alpha_s = \alpha)}}, 
 \text{otherwise}.
\end{cases}
\end{align*}
\vspace{12pt}
\end{algorithm}

\begin{theorem}\label{thm:return}
Under Assumption~\ref{asm:alpha_lipschitz}, 
the regret for Algorithm~\ref{alg:return} with a return-based contract satisfies the  following with probability at least $1-\delta$:
   \begin{align*}
    \mathsf{Regret}_{\mathsf{r}}(T)  &\leq  C( KM(L+1) + \sqrt{ { 2KMS\log(KMT/\delta)} })\cdot T^{2/3}.
    \end{align*}
Furthermore, the expected regret satisfies
    \begin{align*}
    \mathbb{E}[\mathsf{Regret}_{\mathsf{r}}(T)] & = \widetilde \Theta(T^{2/3}).
    \end{align*}
\end{theorem}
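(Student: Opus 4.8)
The plan is to run the standard optimism-in-the-face-of-uncertainty template for one-dimensional Lipschitz (continuum-armed) bandits, adapted to the joint optimization over the scalar contract $\alpha$ and the recommendation policy $\pi$. The key structural observation is that, because each content $c_k(\alpha)$ is a deterministic function of $\alpha$ and the feasibility constraint $\sum_k \pi(j,k)=S$ decouples across users, the value function $V(\alpha):=\max_{\pi\in\Pi} u_{\mathsf r}(\alpha,\pi)$ has the closed form $V(\alpha)=(1-\alpha)\sum_{j=1}^M g_j(\alpha)$, where $g_j(\alpha)$ is the sum of the $S$ largest entries of $\{\mathbb{E}[R(j,c_k(\alpha))]\}_{k=1}^K$. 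Hence the only unknowns are the $MK$ mean rewards restricted to the grid $\mathcal S_\epsilon$, and the policy subproblem is solved exactly by recommending each user its top-$S$ items. Writing $\alpha^\dagger\in\mathcal S_\epsilon$ for the grid point nearest $\alpha^\star$ and using $V(\alpha^\star)=u_{\mathsf r}(\alpha^\star,\pi^\star)$, I would decompose the per-round regret $V(\alpha^\star)-u_{\mathsf r}(\alpha_t,\pi_t)$ into a discretization part $V(\alpha^\star)-V(\alpha^\dagger)$ and an estimation part $V(\alpha^\dagger)-u_{\mathsf r}(\alpha_t,\pi_t)$, and treat the short warm-up phase separately.

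For the discretization part, Assumption~\ref{asm:alpha_lipschitz} makes each $\mathbb{E}[R(j,c_k(\alpha))]$ an $L$-Lipschitz function of $\alpha$; consequently each $g_j$, being a maximum over $S$-subsets of sums of $S$ such terms, is $SL$-Lipschitz and bounded by $S$, so $V$ is Lipschitz with modulus $\mathcal O(MS(L+1))$. Since $|\alpha^\star-\alpha^\dagger|\le \epsilon$, each round contributes at most $\mathcal O(\epsilon MS(L+1))$, for a total discretization cost $\mathcal O(\epsilon MS(L+1)\,T)$. The warm-up lasts $1/\epsilon$ rounds, each costing at most the utility range $\mathcal O(MS)$, hence $\mathcal O(MS/\epsilon)$, which is lower order at $\epsilon=T^{-1/3}$.

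For the estimation part, I would first establish the clean event that the indices are valid upper confidence bounds, i.e.\ $\widehat r^{(t)}(j,c_k(\alpha))\ge \mathbb{E}[R(j,c_k(\alpha))]$ simultaneously over all $j,k$, all $\alpha\in\mathcal S_\epsilon$, and all $t\le T$. Because rewards lie in $[0,1]$ and the empirical means are formed over adaptively chosen rounds, this follows from Hoeffding's inequality and a union bound over the $MK/\epsilon$ triples, the $T$ rounds, and the possible visit counts — precisely why the confidence radius carries $\log(MKT/(\epsilon\delta))$ — and holds with probability at least $1-\delta$. On this event, optimism gives $\widehat u_{\mathsf r}^{(t)}(\alpha_t,\pi_t)\ge \widehat u_{\mathsf r}^{(t)}(\alpha^\dagger,\pi^\dagger)\ge V(\alpha^\dagger)$, so the per-round estimation regret is bounded by $\widehat u_{\mathsf r}^{(t)}(\alpha_t,\pi_t)-u_{\mathsf r}(\alpha_t,\pi_t)$, which is $(1-\alpha_t)$ times a sum of confidence radii over only the $MS$ user–item pairs actually recommended in round $t$. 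Summing over $t$ via the standard pigeonhole/Cauchy–Schwarz potential argument converts this into a $\sqrt{T/\epsilon}$ factor with the polynomial dependence $\sqrt{KMS\,\log(KMT/\delta)}$ stated in the theorem. Balancing the discretization term (order $\epsilon T$) against the estimation term (order $\sqrt{T/\epsilon}$) yields $\epsilon=T^{-1/3}$ and the $T^{2/3}$ rate; choosing $\delta\asymp 1/T$ and absorbing the $\mathcal O(1)$ contribution of the failure event upgrades the high-probability bound to $\mathbb{E}[\mathsf{Regret}_{\mathsf r}(T)]=\widetilde{\mathcal O}(T^{2/3})$.

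For the matching $\widetilde\Omega(T^{2/3})$ lower bound I would embed a one-dimensional Lipschitz bandit into the return-based problem: take $K=M=S=1$ with a fixed singleton content map, so $V(\alpha)=(1-\alpha)\mu(\alpha)$ for an arbitrary $L$-Lipschitz mean reward $\mu$ and Bernoulli $R$. Over a family of instances where $\mu$ is flat except for a single bump of width $\Theta(\epsilon)$ and height $\Theta(L\epsilon)$ at an unknown location, locating the bump requires $\Omega(1/\epsilon)$ exploration while each suboptimal round costs $\Omega(L\epsilon)$, so the standard two-point/Assouad argument (the $\Theta(T^{2/3})$ continuum-armed lower bound of \citet{kleinberg2008multi}) gives $\Omega(T^{2/3})$, and combined with the upper bound yields $\widetilde\Theta(T^{2/3})$. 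The main obstacle, and the point that needs the most care, is the interaction between contract learning and policy optimization: the whole argument rests on the fact that conditioning on $\alpha$ reduces the policy step to a deterministic top-$S$ rule, so optimism in the $MK$ reward estimates alone forces optimism of the jointly optimized objective $\widehat u_{\mathsf r}^{(t)}$ and collapses the true regret of $(\alpha_t,\pi_t)$ onto confidence radii at only the recommended pairs; making this uniform over the adaptively determined visit counts is the technical crux.
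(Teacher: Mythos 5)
Your upper-bound argument is essentially the paper's proof: the same clean event via Hoeffding plus a union bound over $(j,k,\alpha,t)$, the same three-way split into warm-up, discretization, and estimation, the same optimism chain $\widehat u^{(t)}_{\mathsf r}(\alpha_t,\pi_t)\ge \widehat u^{(t)}_{\mathsf r}(\alpha^\dagger,\pi^\dagger)\ge u_{\mathsf r}(\alpha^\dagger,\pi^\dagger)$, and the same pigeonhole summation of confidence radii giving $\sqrt{KMST\log(\cdot)/\epsilon}$ before balancing at $\epsilon=T^{-1/3}$. Your explicit use of the top-$S$ structure to make $V$ an $\mathcal O(MS(L+1))$-Lipschitz function is a slightly sharper bookkeeping than the paper's $KM(L+1)\epsilon$ discretization bound (the paper sums over all $K$ items rather than the $S$ recommended ones), but it is the same argument.

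The lower bound, however, has a genuine gap. You set $K=M=S=1$ so that $V(\alpha)=(1-\alpha)\mu(\alpha)$ and then invoke the standard flat-plus-bump family for $\mu$. But the multiplicative factor $(1-\alpha)$ breaks this embedding: if $\mu\equiv\mu_0$ off a bump of height $\Theta(L\epsilon)$ at location $\alpha_0$, the common baseline $(1-\alpha)\mu_0$ is strictly decreasing with its maximum $\mu_0$ at $\alpha=0$, and the bump's value $(1-\alpha_0)(\mu_0+\Theta(L\epsilon))$ exceeds $\mu_0$ only when $\alpha_0\lesssim L\epsilon/\mu_0$. So for any baseline $\mu_0$ bounded away from $0$ all admissible bump locations collapse into an interval of width $O(\epsilon)$, leaving $O(1)$ distinguishable instances and no $\Omega(T^{2/3})$ bound; and taking $\mu_0=0$ destroys the KL calculation (distinguishing $\mathsf{Bern}(0)$ from $\mathsf{Bern}(L\epsilon)$ takes only $\Theta(1/(L\epsilon))$ samples, yielding $T^{1/2}$, not $T^{2/3}$). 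The missing idea is to pre-compensate the reward for the $(1-\alpha)$ factor: the paper defines $r(c)=\tilde r(c)/(1-c)$ for $c\in[1/2,1]$ (and constant on $[0,1/2)$) so that $u(\alpha)=\tilde r(\alpha)$ on $[1/2,1]$ is itself an arbitrary bounded Lipschitz function, after which the continuum-armed bandit lower bound of Kleinberg applies verbatim on that subinterval. With that one modification your reduction goes through; without it the instance family you describe is not hard.
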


The proof is deferred to Appendix~\ref{proof:return}. A few remarks are provided below. 
\begin{remark}
Algorithm~\ref{alg:return} starts with a uniform discretization of the contract space.  For each contract in the discretized space, we maintain an estimate of the reward for recommending the contract to each user. Our selection of the contract and recommendation policy is based on a  top-$S$ selection of their upper confidence bound.  We show that this extra layer of learning a recommendation policy does not hurt the dependency on $T$ compared to the traditional one-dimensional continuum-armed bandit problem~\citep{kleinberg2003value, kleinberg2013bandits}. 
\end{remark}
\begin{remark}
Our algorithm and analysis focus on   random reward and deterministic content generation functions. 
This enables fast learning of the content generation function and the main difficulty in learning is the estimation of the expected reward. A natural extension is to deal with both random reward and random content generation functions. The main challenge here is that  the expected reward is more complicated due to the maximum operation in recommendation algorithms, which first observe the samples from the content generation process and then make decisions.   
\end{remark}

\subsection{Feature-based contracts}

For feature-based contracts, we assume that the contract lies in a family $\mathcal{G}$. 
We make the following assumption on the smoothness of the related functions. 
\begin{assumption}[Smoothness of reward, content and contract]\label{asm:lipschitz_g}
Assume that the reward, content generation functions and the contract functions are all Lipschitz; i.e.,  for any $j\in[M], k\in[K]$ and any $g, g'\in\mathcal{G}$, and $c, c'\in \mathcal{C}$,
\begin{align*}
     \mathbb{E}[R(j, c)] - \mathbb{E}[R(j, c')] & \leq L_1\|c-c'\|_2, \\
     |g(c) - g(c')| & \leq L_2 \|c-c'\|_2,\\
     \|c_k(g) - c_k(g')\|_2 & \leq  L_3\|g - g'\|_\infty.
\end{align*}
\end{assumption}

Furthermore, we make an assumption on the complexity of the set $\mathcal{G}$.

\begin{assumption} [Intrinsic dimension $d(\mathcal{G})$]\label{asm:dim_g}
Let $\mathcal{G}_\epsilon$ be the $\epsilon$-covering of $\mathcal{G}$ under infinity norm, i.e. for any $g\in\mathcal{G}$, we can always find some $g'\in\mathcal{G}_\epsilon$ such that have $ \|g-g'\|_\infty \leq \epsilon$. 
 We define the intrinsic dimension of the contract design problem using the metric entropy of the space $\mathcal{G}$:
 \begin{align*}
     d(\mathcal{G}) \coloneqq  \sup_{\epsilon<0.1}\frac{\log |\mathcal{G}_\epsilon|}{\log(1/\epsilon)}.
 \end{align*}
\end{assumption}
 
Given these two assumptions,  
we show that a  UCB-type algorithm achieves  bounded regret for feature-based contracts. 



\begin{algorithm}[!htbp]	\caption{Online Learning with Feature-based Contract}
	\label{alg:linear}
	\textbf{Input:} The total rounds $T$.  \\ 
	Set $\epsilon = T^{-1/(d(\mathcal{G})+2)}$.  Take $\mathcal{G}_\epsilon$ as the $\epsilon$-covering of $\mathcal{G}$ under the infinity norm.  Set $\hat r^{(0)}(j, c_k(g)) =1$ for all $j\in[M]$,   $k\in[K]$, $g\in\mathcal{G}_\epsilon$. \\
\textbf{For} $t\in\{0, 1,2,\cdots, T\}$: \\
\quad \textbf{If } $t< 1/\epsilon$:  \\
\qquad Select the contract $g_t$ as the $t$-th item in $\mathcal{G}_\epsilon$, observe  $c_k(g_t)$ for all $k\in[K]$, take arbitrary recommendation policy.\\
\quad \textbf{Else}: \\
\qquad Set 
\begin{align*}
    \hat u_{\mathsf{f}}^{(t)}(g, \pi) &=    \sum_{k=1}^K \left(  \sum_{j=1}^M \hat r(j, c_k(g))\cdot\pi(j, k, \{c_{k'}(g)\}_{k'=1}^K) - g(c_{k}(g))\right).
\end{align*} \\
 \qquad Select the contract  and the recommendation policy as  $$g_t, \pi_t = \mathsf{argmax}_{g\in\mathcal{G}_\epsilon, \pi\in \Pi} \hat u_\mathsf{r}^{(t)}(g, \pi).$$

\quad Observe the generated content $c_k(g_t)$ and the reward  $R^{(t)}(j, c_k(g_t))$  for the recommended items with $\pi_t(j, k)=1$. Update the estimate $\hat r^{(t+1)}(j, c_k(g))$  with UCB-based estimator
\begin{align*}
\hat   r(j, c_k(g)) &= \begin{cases}
0,   \qquad \qquad \qquad \qquad \qquad \qquad \quad \text{if }  \sum_{s=1}^{t} \mathbbm{1}(\pi_s(j, k)=1,  g_s = g) = 0, \\
\frac{\sum_{s=1}^{t} \mathbbm{1}(\pi_s(j, k)=1,  g_s = g) \cdot R^{(s)}(j, c_k(g_s) }{ \sum_{t=1}^{t} \mathbbm{1}(\pi_s(j, k)=1,  g_s = g)} + \sqrt{\frac{2\log(MKT/\epsilon\delta)}{\sum_{s=1}^{t} \mathbbm{1}(\pi_s(j, k)=1,  g_s = g)}},   \text{otherwise }.
\end{cases}   
\end{align*}
\vspace{12pt}
\end{algorithm}

\begin{theorem}\label{thm:feature}
Under Assumption~\ref{asm:lipschitz_g} and~\ref{asm:dim_g}, the regret for feature-based contract satisfies that with probability at least $1-\delta$,
    \begin{align*}
\mathsf{Regret}_{\mathsf{f}}(T) & \leq  C\cdot (KML_1L_3 + KL_2L_3 +  \sqrt{ KMd(\mathcal{G})\log(KMT/\delta)})\cdot T^{(d(\mathcal{G})+1)/(d(\mathcal{G})+2)}.
    \end{align*}
\end{theorem}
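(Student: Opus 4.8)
The plan is to mirror the proof of Theorem~\ref{thm:return}, decomposing the regret into a \emph{discretization (approximation) error} from replacing the continuous family $\mathcal{G}$ by its $\epsilon$-cover $\mathcal{G}_\epsilon$, and a \emph{statistical (UCB) error} from learning the expected rewards over the finitely many contracts in $\mathcal{G}_\epsilon$; the warm-up rounds $t<1/\epsilon$ contribute only a lower-order $O(MS/\epsilon)$ term. Writing $(g^\star,\pi^\star)$ for the continuous optimum and $(g^\star_\epsilon,\pi^\star_\epsilon)$ for the best contract--policy pair restricted to $\mathcal{G}_\epsilon$, the total regret decomposes as
\[
\sum_t \big[u_{\mathsf{f}}(g^\star,\pi^\star)-u_{\mathsf{f}}(g_t,\pi_t)\big] = \underbrace{\sum_t\big[u_{\mathsf{f}}(g^\star,\pi^\star)-u_{\mathsf{f}}(g^\star_\epsilon,\pi^\star_\epsilon)\big]}_{\text{discretization}} + \underbrace{\sum_t\big[u_{\mathsf{f}}(g^\star_\epsilon,\pi^\star_\epsilon)-u_{\mathsf{f}}(g_t,\pi_t)\big]}_{\text{estimation}}.
\]

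For the discretization term I would pick $g'\in\mathcal{G}_\epsilon$ with $\|g^\star-g'\|_\infty\le\epsilon$ and pair it with the policy that recommends, for each user, the same content indices that $\pi^\star$ selects (a valid element of $\Pi$). Chaining the three Lipschitz bounds of Assumption~\ref{asm:lipschitz_g} --- $\|c_k(g')-c_k(g^\star)\|_2\le L_3\epsilon$, hence each expected-reward term moves by at most $L_1L_3\epsilon$ and each payment by at most $(L_2L_3+1)\epsilon$ --- bounds the per-round gap by $(KML_1L_3+KL_2L_3)\epsilon$ up to a lower-order $K\epsilon$ payment slack. Multiplying by $T$ and substituting $\epsilon=T^{-1/(d(\mathcal{G})+2)}$ reproduces exactly the first two terms of the stated bound.

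For the estimation term I would first establish the good event: by Hoeffding and a union bound over all $(j,k,g)\in[M]\times[K]\times\mathcal{G}_\epsilon$ and all rounds, every empirical reward mean lies within its bonus $\sqrt{2\log(MKT/\epsilon\delta)/N}$ of the truth with probability $\ge 1-\delta$ (using $1/\epsilon\le T$ to reduce the log to $\log(KMT/\delta)$). On this event the UCB estimates are optimistic, so the joint maximizer obeys $\hat u^{(t)}(g_t,\pi_t)\ge\hat u^{(t)}(g^\star_\epsilon,\pi^\star_\epsilon)\ge u_{\mathsf{f}}(g^\star_\epsilon,\pi^\star_\epsilon)$; here it is crucial that only the random reward term is estimated while the payment $g(c_k(g))$ is known exactly, so optimism survives the inner $\max_\pi$. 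Hence the per-round estimation regret is at most $\sum_{(j,k):\pi_t(j,k)=1}2\sqrt{2\log(KMT/\delta)/N_{j,k,g_t}(t)}$.

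Summing over $t$, I would use $\sum_{n\le N}n^{-1/2}\le 2\sqrt{N}$ per arm together with Cauchy--Schwarz over the at most $MK|\mathcal{G}_\epsilon|$ reward triples whose counts total at most $MST$, producing an estimation bound of order $\sqrt{\log(KMT/\delta)}\cdot\mathrm{poly}(K,M,S)\cdot\sqrt{|\mathcal{G}_\epsilon|\,T}$; invoking Assumption~\ref{asm:dim_g} as $|\mathcal{G}_\epsilon|\le\epsilon^{-d(\mathcal{G})}$ converts this into order $\sqrt{\epsilon^{-d(\mathcal{G})}T}$. Balancing the discretization error $\Theta(\epsilon T)$ against the estimation error $\Theta(\sqrt{\epsilon^{-d(\mathcal{G})}T})$ forces $\epsilon=T^{-1/(d(\mathcal{G})+2)}$ and renders both equal to $T^{(d(\mathcal{G})+1)/(d(\mathcal{G})+2)}$, the claimed rate. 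I expect the main obstacle to be the estimation step: unlike a fixed finite-armed bandit, changing $g$ also changes the induced contents $c_k(g)$, so the reward ``arms'' never recur across distinct contracts and the naive arm set is a continuum --- it is precisely the metric-entropy dimension $d(\mathcal{G})$ of Assumption~\ref{asm:dim_g}, accessed through the cover $\mathcal{G}_\epsilon$, that makes the arm set finite and the Cauchy--Schwarz accounting effective, while one must simultaneously check that optimism and the sum-of-bonuses bound are not disrupted by the joint maximization over the combinatorial policy class $\Pi$.
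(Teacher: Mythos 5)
Your proposal is correct and follows essentially the same route as the paper's proof: an $\epsilon$-cover of $\mathcal{G}$ under the sup norm, a discretization error of order $K(ML_1L_3+L_2L_3)\epsilon$ per round obtained by chaining the three Lipschitz constants (reusing the optimal policy on the perturbed contents), a Hoeffding-plus-union-bound optimism event over all $(j,k,g)\in[M]\times[K]\times\mathcal{G}_\epsilon$, and a sum-of-bonuses/Cauchy--Schwarz argument yielding $\sqrt{KM|\mathcal{G}_\epsilon|T\log(\cdot)}$ before balancing with $\epsilon=T^{-1/(d(\mathcal{G})+2)}$. The subtleties you flag --- that optimism survives the joint maximization over $\Pi$ because the payment $g(c_k(g))$ is deterministic, and that the cover is what makes the arm set finite --- are exactly the points the paper's argument relies on.
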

The proof is deferred to Appendix~\ref{proof:feature}. A few remarks  are provided below.
\begin{remark}
    As a concrete example, consider the family of contracts and rewards that are linear with the generated content. We have $ \mathcal{C} = \mathbb{B}^d$,  $\mathcal{G} = \{g_\theta(c) = \theta^\top c, \forall c\in\mathcal{C} \mid \theta\in\mathbb{B}^d\}$, $\mathbb{E}[R(j, c)] \in \{r(c) = \theta^\top c, \forall c\in\mathcal{C} \mid \theta\in\mathbb{B}^d \}$, $c_k(g) = \argmax_{c\in\mathcal{C}} g(c)$.  This family satisfies both Assumption~\ref{asm:lipschitz_g} and~\ref{asm:dim_g} with $L_1= L_2=L_3 =1$, and $d(\mathcal{G})=d$. Thus Theorem~\ref{thm:feature} provides an upper bound $\mathcal{O}(T^{(d+1)/(d+2)})$. This is tight when compared with the bound for continuum-armed bandit in $d$-dimension $\Theta(T^{(d+1)/(d+2)})$, which is a special case when there is only one content creator and one user~\citep{kleinberg2003value, kleinberg2013bandits}. 
\end{remark}
\begin{remark}
    Compared with the regret $\Theta(T^{2/3})$ in Theorem~\ref{thm:return}, we can see that the optimal feature-based contract may require more samples to approximate when the function class has large complexity.  This is consistent with the observation in the traditional contract theory, where linear contract is easier to learn than general contract~\citep{zhu2022sample}. Although the comparison between regret is clear, the optimal utilities for return-based contracts and feature-based contracts are not comparable due to different content generation functions. In Section~\ref{sec:full}, we provide a specific setting that makes the comparison possible.
\end{remark}
\section{Alternative Formulation without Smoothness}\label{sec:full}

In the above analysis, we rely on smoothness assumptions on the generated content. In the traditional contract design literature, such smoothness assumptions are usually not required. Instead, one assumes that the content created is always the content with the largest expected utility for the content creator or a sample from the distribution that maximizes the expected utility. 

A difficulty in adopting such a formulation in our setting is that the  expected utility for the content creator also depends on the content generated by the other content creators due to the flexibility in the recommendation policy. To address this challenge, we initiate the study with the case where all of the content is recommended, i.e., $K=S$. 
In this case, the recommendation policy satisfies $\pi(j, k, \{c_{k'}\}_{k'=1}^K) = 1$ for any $j\in[M], k\in[K], c_{k'}\in\mathcal{C}$. Thus the problem reduces to a contract design problem. We let $l_k(c)$ be the cost for the $k$-th content creator in creating content $c$.  The utility for the return-based contract is reduced to a linear contract:
\begin{align}
    u_{\mathsf{r}}(\alpha) &= (1-\alpha) \mathbb{E}\left[\sum_{k=1}^K  \sum_{j=1}^M R(j, c^{\star}_k)\right], \nonumber \\
    \text{ where } c^{\star}_k& = \argmax_{c\in\mathcal{C}} \alpha \cdot  \mathbb{E}\left[\sum_{j=1}^M R(j, c)\right]- l_k(c). \label{eq:return-full} 
\end{align}

The utility of  the feature-based contract can be written as
\begin{align}
u_{\mathsf{f}}(g) &  = \mathbb{E}\left[\sum_{k=1}^K \left(\left(\sum_{j=1}^M R(j, c_k^\star)\right)-g(c_k^\star)\right)\right], \nonumber \\
\text{ where } c^{\star}_k& = \argmax_{c\in\mathcal{C}} g(c) - l_k(c). \label{eq:feature-full}
\end{align}
We are able to compare the optimal utility for both cases when $\mathcal{G}$ contains the family of reward functions:
\begin{proposition}\label{prop:comp}
   Assume that the family of reward function is a subset of $\mathcal{G}$, i.e., $\{r_\alpha(c) = \alpha\cdot \mathbb{E}[\sum_{j=1}^M R(j, c)] \mid \alpha\in[0, 1]\} \subset \mathcal{G}$. Then we have
   \begin{align*}
     \max_{g\in\mathcal{G}} u_{\mathsf{f}}(g) \geq \max_{\alpha\in[0, 1]}   u_{\mathsf{r}}(\alpha).
   \end{align*}
\end{proposition}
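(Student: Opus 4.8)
The plan is to prove the inequality by exhibiting a single feature-based contract whose utility equals the optimal return-based utility; since $\max_{g \in \mathcal{G}} u_{\mathsf{f}}(g)$ dominates the value of $u_{\mathsf{f}}$ at any particular admissible $g$, this immediately yields the claim. Concretely, let $\alpha^\star \in \argmax_{\alpha \in [0,1]} u_{\mathsf{r}}(\alpha)$ and consider the candidate contract $g = r_{\alpha^\star}$, where $r_{\alpha^\star}(c) = \alpha^\star \cdot \mathbb{E}[\sum_{j=1}^M R(j,c)]$. The containment hypothesis $\{r_\alpha\}_{\alpha \in [0,1]} \subset \mathcal{G}$ guarantees that $r_{\alpha^\star}$ is a legitimate feature-based contract, so it suffices to show $u_{\mathsf{f}}(r_{\alpha^\star}) = u_{\mathsf{r}}(\alpha^\star)$.

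The first substantive step is to argue that the content induced by the two contracts coincides. Under the return-based contract with parameter $\alpha^\star$, the $k$-th creator's best response solves $c_k^\star = \argmax_{c \in \mathcal{C}} \alpha^\star \mathbb{E}[\sum_{j=1}^M R(j,c)] - l_k(c)$ by~\eqref{eq:return-full}, while under $g = r_{\alpha^\star}$ the best response solves $\argmax_{c \in \mathcal{C}} g(c) - l_k(c)$ by~\eqref{eq:feature-full}. Substituting $g = r_{\alpha^\star}$ reveals these to be the identical optimization problem in $c$, so I may take the same maximizer $c_k^\star$ in both settings (fixing a common selection rule if the $\argmax$ is not unique).

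It then remains to evaluate $u_{\mathsf{f}}(r_{\alpha^\star})$. The essential observation is that, because the content-generation map is deterministic, the payment $g(c_k^\star) = \alpha^\star \mathbb{E}[\sum_{j=1}^M R(j,c_k^\star)]$ is a deterministic quantity—it is already an expected reward—and can therefore be pulled outside the outer expectation. This yields
\[
u_{\mathsf{f}}(r_{\alpha^\star}) = \mathbb{E}\Big[\sum_{k=1}^K \sum_{j=1}^M R(j,c_k^\star)\Big] - \alpha^\star \sum_{k=1}^K \mathbb{E}\Big[\sum_{j=1}^M R(j,c_k^\star)\Big] = (1-\alpha^\star)\,\mathbb{E}\Big[\sum_{k=1}^K \sum_{j=1}^M R(j,c_k^\star)\Big],
\]
which is exactly $u_{\mathsf{r}}(\alpha^\star)$ by~\eqref{eq:return-full}. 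Combining $\max_{g} u_{\mathsf{f}}(g) \geq u_{\mathsf{f}}(r_{\alpha^\star})$ with $u_{\mathsf{r}}(\alpha^\star) = \max_{\alpha} u_{\mathsf{r}}(\alpha)$ closes the argument.

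I expect no deep obstacle here: the result is fundamentally a ``the feature-based family is rich enough to imitate any return-based contract in expectation'' argument. The only points requiring care are bookkeeping ones—ensuring the two creator optimization problems are literally the same function of $c$ so that the induced contents agree, and correctly tracking which quantities are random (the realized rewards $R(j,c_k^\star)$) versus deterministic (the feature payment $g(c_k^\star)$, which coincides with the \emph{expected} return-based payment). The matching of expected payments is precisely what turns the inequality into an equality at the witness $g$, and any strict gain of feature-based over return-based contracts only materializes when $\mathcal{G}$ contains contracts outside the one-parameter family $\{r_\alpha\}$.
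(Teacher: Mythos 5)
Your proposal is correct and follows essentially the same route as the paper: restrict attention to the embedded family $\{r_\alpha\}$, observe that under $g = r_\alpha$ the creator's best-response problem coincides with that of the return-based contract $\alpha$ and the payment $g(c_k^\star)$ equals $\alpha$ times the expected reward, so $u_{\mathsf{f}}(r_\alpha) = u_{\mathsf{r}}(\alpha)$. Your write-up is simply a more explicit version (instantiating at the maximizer $\alpha^\star$ and flagging the tie-breaking and determinism issues) of the paper's one-line argument.
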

The proof is deferred to Appendix~\ref{proof:comp}. 
It shows that a properly designed feature-based contract can lead to higher optimal utility than the return-based contract. However, the optimal feature-based contract usually requires more samples to approximate  compared to the return-based contract.
We make this point clear in the following theorem. The proof is deferred to Appendix~\ref{proof:full}.
\begin{theorem}\label{thm:full}
   Tthe regret for return-based contract in Equation (\ref{eq:return-full}) is $\Theta(T^{2/3})$. 
   
   For feature-based contract  in Equation (\ref{eq:feature-full}) with linear family $\mathcal{C} = \mathbb{B}^d$,  $\mathbb{E}[R(j, c)] = u_j^\top c$, $g\in\{g_\theta(c) = \theta^\top c \mid \theta \in\mathbb{B}^d\}$,  there exists some algorithm such that the regret is $\mathcal{O}(T^{2d/(2d+1)})$. 
\end{theorem}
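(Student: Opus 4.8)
The plan is to collapse the recommendation layer---since $K=S$ forces $\pi\equiv 1$---and to treat each part as a one- or $d$-dimensional continuum-armed bandit over the contract parameter, controlling separately the stochastic error in estimating the expected reward and the deterministic bias from discretizing the contract space. The genuinely new ingredient, relative to Theorems~\ref{thm:return} and~\ref{thm:feature}, is that the best response $c_k^\star$ is now an $\argmax$ rather than a smooth function of the contract, so the discretization bias must be bounded from the structure of the best-response map rather than from a Lipschitz hypothesis.

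For the return-based upper bound I would discretize $[0,1]$ into the grid $\mathcal{S}_\epsilon$ and run the UCB scheme of Algorithm~\ref{alg:return}. The stochastic term is handled verbatim as in Theorem~\ref{thm:return}, contributing $\widetilde{O}(\sqrt{KMT/\epsilon})$. The crux is a bias bound $\max_{\alpha}u_{\mathsf r}(\alpha)-\max_{\alpha\in\mathcal S_\epsilon}u_{\mathsf r}(\alpha)\le C\,KM\,\epsilon$. Writing $V_k(c)=\mathbb{E}[\sum_j R(j,c)]$ and $c_k^\star(\alpha)=\argmax_c \alpha V_k(c)-l_k(c)$, a single-crossing/monotone-comparative-statics argument for linear contracts shows $\alpha\mapsto V_k(c_k^\star(\alpha))$ is nondecreasing; hence for the smallest grid point $\alpha'\ge\alpha^\star$ we get $u_{\mathsf r}(\alpha')=(1-\alpha')\sum_k V_k(c_k^\star(\alpha'))\ge(1-\alpha^\star-\epsilon)\sum_k V_k(c_k^\star(\alpha^\star))$, losing only $O(KM\epsilon)$. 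Balancing the two terms at $\epsilon=T^{-1/3}$ gives the $O(T^{2/3})$ bound. For the matching $\Omega(T^{2/3})$ lower bound I would exhibit a family of instances (choosing the costs $l_k$ and the reward laws $R(j,\cdot)$) whose induced utility $u_{\mathsf r}(\cdot)$ realizes the standard height-$\Theta(\epsilon)$, width-$\Theta(\epsilon)$ Lipschitz ``bump'' hard instances of one-dimensional continuum-armed bandits~\citep{kleinberg2003value,kleinberg2013bandits}, designed so that the content responses $c_k^\star(\alpha)$ are locally constant---hence uninformative---and the entire difficulty is concentrated in estimating a reward mean through noisy feedback, so that observing individual rewards and contents confers no advantage and their $\Omega(T^{2/3})$ lower bound applies.

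For the feature-based linear family I would use an explore-then-commit algorithm over an $\epsilon$-net $\mathcal{N}_\epsilon\subset\mathbb{B}^d$ of the contract parameter $\theta$, with $|\mathcal{N}_\epsilon|\asymp\epsilon^{-d}$: in the first phase play every $\theta\in\mathcal{N}_\epsilon$ once, recording the deterministically observed responses $c_k^\star(\theta)=\argmax_c \theta^\top c-l_k(c)$ and accumulating reward samples, then commit to $\widehat\theta=\argmax_{\theta\in\mathcal N_\epsilon}\widehat u_{\mathsf f}(\theta)$ for the remaining rounds. Because $\mathbb{E}[R(j,c)]=u_j^\top c$ is linear in $c$ and the utility $u_{\mathsf f}(\theta)=\sum_k(U-\theta)^\top c_k^\star(\theta)$ with $U=\sum_j u_j$ depends on the reward parameters only through the single vector $U$, the estimation error is cheap: after the exploration phase $\|\widehat U-U\|\lesssim \epsilon^{d/2}$ provided the recorded contents span $\mathbb{R}^d$ (enforced by a negligible amount of extra exploration), so the estimation contribution to the commit-phase regret, $T\cdot K\|\widehat U-U\|\lesssim T\epsilon^{d/2}$, is dominated by the bias term below for every $d\ge 1$. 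The first-phase regret is $O(K\epsilon^{-d})$.

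The main obstacle is the discretization bias $\mathrm{bias}(\epsilon)=\max_\theta u_{\mathsf f}(\theta)-\max_{\theta\in\mathcal N_\epsilon}u_{\mathsf f}(\theta)$ for a possibly non-smooth best-response map, where the Lipschitz bound used in Theorem~\ref{thm:feature} is unavailable. I would instead exploit the convex-conjugate structure: $c_k^\star(\theta)=\nabla h_k(\theta)$ for the convex potential $h_k(\theta)=\max_c(\theta^\top c-l_k(c))$, so $\sum_k c_k^\star(\theta)=\nabla H(\theta)$ with $H=\sum_k h_k$ convex and $K$-Lipschitz and $u_{\mathsf f}(\theta)=(U-\theta)^\top\nabla H(\theta)$. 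Expanding around the optimum gives $u_{\mathsf f}(\theta^\star)-u_{\mathsf f}(\theta')=(U-\theta')^\top(\nabla H(\theta^\star)-\nabla H(\theta'))-(\theta^\star-\theta')^\top\nabla H(\theta^\star)$, where the last term is $O(K\epsilon)$. The first term is the difficulty: the gradient jump $\nabla H(\theta^\star)-\nabla H(\theta')$ can be order one even when $\|\theta^\star-\theta'\|\le\epsilon$, but stationarity of $\theta^\star$ forces $U-\theta^\star$ to be nearly orthogonal to the directions in which $\nabla H$ varies sharply, turning this inner product into a second-order quantity. Making this rigorous for non-differentiable $H$---via an averaging argument over $\theta'\in\mathcal N_\epsilon\cap B(\theta^\star,\epsilon)$ that uses monotonicity of the gradient map of the convex $H$---is where I expect the real work, and I anticipate it yields the H\"older-type bound $\mathrm{bias}(\epsilon)\lesssim \sqrt{\epsilon}$ (tight at the order-one curvature concentration allowed by mere convexity, and collapsing to the $O(\epsilon)$ of the smooth case when $\nabla H$ is Lipschitz). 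Given this bound the commit-phase regret is $O(T\sqrt\epsilon)$, and balancing $K\epsilon^{-d}$ against $T\sqrt\epsilon$ at $\epsilon=T^{-2/(2d+1)}$ yields $\mathsf{Regret}_{\mathsf f}(T)=O(T^{2d/(2d+1)})$, as claimed.
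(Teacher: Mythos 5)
Your return-based argument coincides with the paper's: summing the two optimality inequalities for $c_k^\star(\alpha)$ and $c_k^\star(\alpha')$ gives monotonicity of $\alpha\mapsto \mathbb{E}\bigl[\sum_j R(j,c_k^\star(\alpha))\bigr]$ (the paper's first lemma in Appendix~\ref{proof:full}), the grid point just above $\alpha^\star$ then loses only $O(KM\epsilon)$, and the rest is deferred to the linear-contract analysis of \citet{zhu2022sample}. That half is fine.

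The feature-based half has a genuine gap, and it sits exactly where you said the ``real work'' is. You correctly identify the two structural facts the paper uses --- $\sum_k c_k^\star(\theta)=\nabla H(\theta)$ for a convex $H$, and monotonicity of this gradient map (the paper's Lemma~\ref{lem:gen_inequality}) --- and your target bound $\mathrm{bias}(\epsilon)\lesssim\sqrt{\epsilon}$ does balance to $T^{2d/(2d+1)}$. But your proposed mechanism for proving it fails on both counts. First, restricting to grid points $\theta'\in\mathcal{N}_\epsilon\cap B(\theta^\star,\epsilon)$ cannot work: the correct comparison point must be taken at distance $\Theta(\sqrt{\epsilon})$ from $\theta^\star$, not $\epsilon$, because the controllable quantity is $\|\gamma\|+\|\eta\|/\alpha$ where $\eta$ is the rounding error ($\asymp\epsilon$) and $\alpha\|U-\theta^\star\|$ is the step length; minimizing $s+\epsilon/s$ forces $s\asymp\sqrt{\epsilon}$, and within the $\epsilon$-ball the ratio $\|\eta\|/\alpha$ is $\Theta(1)$. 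Second, the averaging/stationarity argument over a symmetric neighborhood is not salvageable: take $H(\theta)=K|\theta_1|$ near $\theta^\star$ with $\theta^\star_1=0$ and $U_1>0$, so that $u_{\mathsf f}(\theta)=K\,\mathrm{sgn}(\theta_1)(U_1-\theta_1)$. Half of the nearby grid points (those with $\theta'_1<0$) suffer an $\Omega(KU_1)$ loss, so the average over $B(\theta^\star,\epsilon)$ is $\Omega(K)$; only the points on the side of the discontinuity facet facing $U$ are good, and ``stationarity of $\theta^\star$'' gives you no orthogonality here because $U-\theta^\star$ is not orthogonal to the jump direction $e_1$. What actually rescues the argument is the paper's Lemma~\ref{lem:gen_continuity}: applying the monotonicity inequality to the specific displacement $\gamma=\alpha(U-\theta)+\eta$ yields the one-sided directional continuity $u_{\mathsf f,k}(\theta)-u_{\mathsf f,k}(\theta+\gamma)\le 2(\|\gamma\|_2+\|\eta\|_2/\alpha)$, i.e., the utility can only drop by a controlled amount when you step \emph{toward} $U$, and the discretization (a spherical-code net in \citet{zhu2022sample}, or a uniform net with the $\sqrt{\epsilon}$-step above) is chosen so that such a displacement lands on the grid with $\|\eta\|_2/\alpha$ small. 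Separately, your claim that a ``negligible amount of extra exploration'' makes the observed contents span $\mathbb{R}^d$ is not justified --- the contents are the agents' best responses and may lie in a proper subspace for every $\theta$ --- though this is repairable since one only needs $U^\top c$ on realized contents, which the paper estimates tabularly per grid contract.
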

The proof is based on a reduction to the traditional contract design problem in~\citet{zhu2022sample}. Following the same construction as~\citet[Theorem 5]{zhu2022sample},  we can also see that it requires at least $\Omega((1/\epsilon)^d)$ samples to achieve regret smaller than $\epsilon T$ for the feature-based contract in the worst case.    Combined with Proposition~\ref{prop:comp}, one can see that in this formulation with full recommendation, the optimal return-based contract requires less sample to approximate than the optimal feature-based contract. However, the  utility achieved by the optimal return-based contract is always smaller than that achieved by the optimal feature-based contract from a well-designed feature-based contract. This suggests an interesting trade-off: when the data size is small,  one may start with a return-based contract for fast convergence. As the data size grows larger, it is possible to further improve the utility by switching to a well-designed feature-based contract family. It is an interesting open problem how one can  design a contract family that combines the return-based contract and feature-based contract to achieve  the optimal trade-off in the online learning setting. 

 \section{Discussions and Conclusion}
In this paper, we initiate the study of online learning algorithms in a creator economy. We decompose the role of the platform into two parts: providing reasonable contracts to the content creator, and making appropriate recommendations to users. We analyze and provide tight regret guarantees for two types of contracts---return-based contracts and feature-based contracts. 

We briefly discuss several potential future directions in this line of research.

 \begin{itemize}
 \item For our alternative formulation without smoothness in Section~\ref{sec:full}, we  focus on the case of full recommendation. For the partial recommendation setting where the recommendation policy needs to be jointly optimized, the choice of the content generation function is less clear since the utility for each content creator depends on both the proposed contract and the recommendation policy, which further involve the content created by others. It is an interesting challenge to formulate and analyze this setting.
 \item 
Our study mostly focuses on the tabular case when the reward and the content generation for each user and content are independent. It would be interesting to study  the regret for the function approximation case,  especially on how the joint optimization of contract and recommendation can be combined with end-to-end neural network training.

\item  We assume that the content generation function is a deterministic function of the contract. This greatly simplifies the estimation of the content generation function. In practice, the content generation functions may be  random as well. This calls for new techniques to deal with the maximum in the recommendation policy.
     \item We assume that the content creators and users have a static content generation function and preferences throughout the process. In practice, the content creators may learn the users' preferences from the observations via some no-regret algorithms. Moreover, the users' preferences might be slowly changing as well. It is important to consider a full game-theoretic setting where the platform, users, and content creators are all strategic and maximizing their own utility.
 \end{itemize}
\newpage
\bibliography{ref}

\newpage 
\appendix
\section{Proof of Theorem~\ref{thm:return}}\label{proof:return}
\begin{proof}
\textbf{Upper Bound.}

We start with the upper bound for a return-based contract. 
By Hoeffding's inequality, for any $j\in[M]$, $k\in[K]$, $\alpha\in\mathcal{S}_\epsilon$, with probability at least $1-\delta$, 
\begin{align*}
 \hat r^{(t)}(j, c_k(\alpha))- \mathbb{E}[R(j, c_k(\alpha))]  \in\left[0, 2\sqrt{\frac{ 2\log(1/\delta)}{\sum_{s=1}^{t} \mathbbm{1}(\pi_s(j, k)=1,  \alpha_s = \alpha)}}\right].
\end{align*}
By taking a union bound over all $j\in[M]$, $k\in[K]$, $\alpha\in\mathcal{S}_\epsilon$, we know that the following event holds with probability at least $1-\delta$:
\begin{align*}
    E = \left\{\forall j\in[M], k\in[K], t\in[T], \alpha\in \mathcal{S}_\epsilon,  \hat r^{(t)}(j, c_k(\alpha))- \mathbb{E}[R(j, c_k(\alpha))]  \in\left[0, 2\sqrt{\frac{ 2\log(KMT/(\epsilon\delta))}{\sum_{s=1}^{t} \mathbbm{1}(\pi_s(j, k)=1,  \alpha_s = \alpha)}} \right]\right\}.
\end{align*}
Throughout the remaining proof, we condition on the event $E$. This indicates that $u_\mathsf{r}(\alpha,\pi)\leq \hat u_\mathsf{r}^{(t)}(\alpha,\pi)$ for any $t, \alpha$ and $\pi$. 
Let $\alpha_\epsilon^\star\in\mathcal{S}_\epsilon$ be the contract that is closest  in absolute value distance to $\alpha^\star$. From the definition of $\mathcal{S}_\epsilon$, we know that  
\begin{align*}
   | \alpha_\epsilon^\star -\alpha | \leq \epsilon.
\end{align*}
Let $\pi^\star_\epsilon = \argmax_{\pi\in\Pi} u_\mathsf{r}(\alpha^\star_\epsilon, \pi)$. We know that
\begin{align}
   & u_{\mathsf{r}}(\alpha^\star, \pi^\star) - u_{\mathsf{r}}(\alpha^\star_\epsilon, \pi^\star_\epsilon)  \nonumber \\ 
     = &(1-\alpha^\star) \mathbb{E}\left[\sum_{k=1}^K  \sum_{j=1}^M R(j, c_k(\alpha^\star))\cdot\pi^\star(j, k, \{c_{k'}(\alpha^\star)\}_{k'=1}^K)\right] \nonumber  \\ 
    & - (1-\alpha^\star_\epsilon) \mathbb{E}\left[\sum_{k=1}^K  \sum_{j=1}^M R(j, c_k(\alpha^\star_\epsilon))\cdot\pi^\star_\epsilon(j, k, \{c_{k'}(\alpha^\star_\epsilon)\}_{k'=1}^K)\right]  \nonumber \\ 
     \leq  &(1-\alpha^\star) \mathbb{E}\left[\sum_{k=1}^K  \sum_{j=1}^M R(j, c_k(\alpha^\star))\cdot\pi^\star(j, k, \{c_{k'}(\alpha^\star)\}_{k'=1}^K)\right] \nonumber  \\ 
    & - (1-\alpha^\star_\epsilon) \mathbb{E}\left[\sum_{k=1}^K  \sum_{j=1}^M R(j, c_k(\alpha^\star_\epsilon))\cdot\pi^\star(j, k, \{c_{k'}(\alpha^\star)\}_{k'=1}^K)\right]  \nonumber \\
   \leq  & KM \epsilon + \sum_{k=1}^K  \sum_{j=1}^M |\mathbb{E}[R(j, c_k(\alpha^\star))] - \mathbb{E}[R(j, c_k(\alpha^\star_\epsilon))] | \nonumber \\
     \leq & KM(L+1)\epsilon. \label{eq:alpha_discretization}
\end{align}
The regret can be decomposed as follows.
\begin{align*}
   &   \mathsf{Regret}_{\mathsf{r}}(T) \\
   \leq   &  \frac{KM}{\epsilon} + \sum_{t={1}/{\epsilon}}^{T} u_{\mathsf{r}}(\alpha^\star, \pi^\star) - u_{\mathsf{r}}(\alpha^\star_\epsilon, \pi^\star_\epsilon)  +  \sum_{t={1}/{\epsilon}}^{T} u_{\mathsf{r}}(\alpha^\star_\epsilon, \pi^\star_\epsilon)  - u_{\mathsf{r}}(\alpha_t, \pi_t)     \\ 
   =  & \frac{KM}{\epsilon} + KM(L+1)\epsilon T +  \sum_{t={1}/{\epsilon}}^{T} (u_{\mathsf{r}}(\alpha^\star_\epsilon, \pi^\star_\epsilon) - \hat u_{\mathsf{r}}^{(t)}(\alpha^\star_\epsilon, \pi^\star_\epsilon) ) \\ 
   & \quad + ( \hat u^{(t)}_{\mathsf{r}}(\alpha^\star_\epsilon, \pi^\star_\epsilon) - \hat u^{(t)}_{\mathsf{r}}(\alpha_t, \pi_t)) + (\hat u^{(t)}_{\mathsf{r}}( \alpha_t, \pi_t) -   u_{\mathsf{r}}(\alpha_t, \pi_t)) \\ 
   \leq & \frac{KM}{\epsilon} + KM(L+1)\epsilon T +  \sum_{t={1}/{\epsilon}}^{T}  (\hat u^{(t)}_{\mathsf{r}}( \alpha_t, \pi_t) -   u_{\mathsf{r}}(\alpha_t, \pi_t)).
\end{align*}
Here the last inequality uses the fact that $u_{\mathsf{r}}(\alpha^\star_\epsilon, \pi^\star_\epsilon) \leq \hat u_{\mathsf{r}}^{(t)}(\alpha^\star_\epsilon, \pi^\star_\epsilon) $  from event $E$. And $\hat u^{(t)}_{\mathsf{r}}(\alpha^\star_\epsilon, \pi^\star_\epsilon) \leq \hat u^{(t)}_{\mathsf{r}}(\alpha_t, \pi_t)$ comes from the fact that $\alpha_t, \pi_t$ is the maximizer. 

Now we proceed to the final sum. We have
\begin{align*}
    & \sum_{t=1}^{T}  (\hat u^{(t)}_{\mathsf{r}}( \alpha_t, \pi_t) -   u_{\mathsf{r}}(\alpha_t, \pi_t)) \\ 
    \leq &  \sum_{t=1}^T (1-\alpha_t)\Bigg( \sum_{k=1}^K  \sum_{j=1}^M \hat r^{(t)}(j, c_k(\alpha_t))\cdot\pi_t(j, k, \{c_{k'}(\alpha_t)\}_{k'=1}^K) \\
    & \quad - \mathbb{E}\left[\sum_{k=1}^K  \sum_{j=1}^M R(j, c_k(\alpha_t))\cdot\pi_t(j, k, \{c_{k'}(\alpha_t)\}_{k'=1}^K)\right] \Bigg)\\
  = &  \sum_{t=1}^T (1-\alpha_t)\left( \sum_{k=1}^K  \sum_{j=1}^M (\hat r^{(t)}(j, c_k(\alpha_t)) - R(j, c_k(\alpha_t)))\cdot\pi_t(j, k, \{c_{k'}(\alpha_t)\}_{k'=1}^K) \right)\\
  \leq  & 2\sum_{t=1}^T  \sum_{k=1}^K  \sum_{j=1}^M \sum_{\alpha\in\mathcal{S}_\epsilon}\sqrt{\frac{ 2\log(KMT/(\epsilon\delta))}{\sum_{s=1}^{t-1} \mathbbm{1}(\pi_s(j, k)=1,  \alpha_s = \alpha)}} \cdot\pi_t(j, k, \{c_{k'}(\alpha_t)\}_{k'=1}^K)\cdot \mathbbm{1}(\alpha_t = \alpha)  \\ 
   = & 2\sum_{t=1}^T  \sum_{k=1}^K  \sum_{j=1}^M \sum_{\alpha\in\mathcal{S}_\epsilon}\mathbbm{1}(\pi_t(j, k)=1,  \alpha_t = \alpha)\sqrt{\frac{ 2\log(KMT/(\epsilon\delta))}{\sum_{s=1}^{t-1} \mathbbm{1}(\pi_s(j, k)=1,  \alpha_s = \alpha)}} \\ 
 \leq &  C    \sqrt{ { KMS T\log(KMT/(\epsilon\delta))/\epsilon} }.
\end{align*}
Here the last inequality uses the fact that for any $k, j, \alpha$, the summation always takes the form of $\sum_{l=1}^{T(k, j, \alpha)}1/\sqrt{l}$. 
And thus the regret is upper bounded as follows.
\begin{align*}
    \mathsf{Regret}_{\mathsf{r}}(T) 
   \leq C({KM}/{\epsilon} + KM(L+1)\epsilon T + \sqrt{ { 2KMST\log(KMT/(\epsilon\delta))/\epsilon} }).  
\end{align*}
By taking  $\epsilon = T^{-1/3}$, we get that with probability at least $1-\delta$, 
\begin{align*}
     \mathsf{Regret}_{\mathsf{r}}(T)\leq  C( KM(L+1) + \sqrt{ { 2KMS\log(KMT/\delta)} })\cdot T^{2/3}.
\end{align*}

\textbf{Lower Bound.}

Now we present the lower bound, which is based on a direct reduction to the continuum-armed bandit problem.  Consider the case when there is only  $K=1$ content creator and $M=1$ user. In each round, the one content generated is always recommended to the user. We let $c(\alpha)=\alpha$, and 
\begin{align*}
    r(c) = \begin{cases}
        2\tilde r(1/2), & c \in [0, 1/2), \\
        \frac{\tilde r(c)}{1-c}, &  c\in[1/2, 1].
    \end{cases}
\end{align*}
In this case, we can see that $r(c)$ is $3L$-Lipschitz if $\tilde r(c)$ is $L$-Lipschitz and bounded in $[0,1]$. 
we can see that $$u(\alpha) =\begin{cases}
        \tilde r(1/2), & \alpha \in [0, 1/2), \\
       {\tilde r(\alpha)}, &  \alpha\in[1/2, 1].
    \end{cases} $$
This reduces to a continuum-armed bandit problem in the regime of $[1/2, 1]$, where the regret is lower bounded by $\Omega(T^{2/3})$~\citep{kleinberg2004nearly, kleinberg2013bandits}. 
\end{proof}

\section{Proof of Theorem~\ref{thm:feature}}\label{proof:feature}
\begin{proof}
\textbf{Upper bound.}

By Hoeffding's inequality, for any $j\in[M]$, $k\in[K]$, $g\in\mathcal{G}_\epsilon$, with probability at least $1-\delta$, 
\begin{align*}
 \hat r^{(t)}(j, c_k(g))- \mathbb{E}[R(j, c_k(g))]  \in\left[0, 2\sqrt{\frac{ 2\log(1/\delta)}{\sum_{s=1}^{t} \mathbbm{1}(\pi_s(j, k)=1,  g_s = g)}}\right].
\end{align*}
By taking union bound over all $j\in[M]$, $k\in[K]$, $g\in\mathcal{G}_\epsilon$, we know that the following event holds with probability at least $1-\delta$:
\begin{align*}
    E = \left\{\forall j\in[M], k\in[K], t\in[T], g\in \mathcal{G}_\epsilon,  \hat r^{(t)}(j, c_k(g))- \mathbb{E}[R(j, c_k(g))]  \in\left[0, 2\sqrt{\frac{ 2d(\mathcal{G})\log(KMT/(\epsilon\delta))}{\sum_{s=1}^{t} \mathbbm{1}(\pi_s(j, k)=1,  g_s = g)}} \right]\right\}.
\end{align*}
Throughout the remaining proof, we condition on the event $E$. This indicates that $u_\mathsf{r}(g,\pi)\leq \hat u_\mathsf{f}^{(t)}(g,\pi)$ for any $t, g$ and $\pi$. 

Let $g_\epsilon^\star\in\mathcal{G}_\epsilon$ be the contract that is closest  in infinity norm to $g^\star$. From the definition of $\mathcal{G}_\epsilon$, we know that  
\begin{align*}
   \| g_\epsilon^\star -g \|_\infty \leq \epsilon.
\end{align*}
Let $\pi^\star_\epsilon = \argmax_{\pi\in\Pi} u_\mathsf{r}(g^\star_\epsilon, \pi)$. We know that
\begin{align}
   & u_{\mathsf{f}}(g^\star, \pi^\star) - u_{\mathsf{f}}(g^\star_\epsilon, \pi^\star_\epsilon)  \nonumber \\ 
     = & \mathbb{E}\left[\sum_{k=1}^K  \sum_{j=1}^M R(j, c_k(g^\star))\cdot\pi^\star(j, k, \{c_{k'}(g^\star)\}_{k'=1}^K)- g^\star(c_k(g^\star))\right]  \nonumber  \\ 
    & -  \mathbb{E}\left[\sum_{k=1}^K  \sum_{j=1}^M R(j, c_k(g^\star_\epsilon))\cdot\pi^\star_\epsilon(j, k, \{c_{k'}(g^\star_\epsilon)\}_{k'=1}^K) - g^\star_\epsilon(c_k(g^\star_\epsilon))\right]  \nonumber \\ 
     \leq  & \mathbb{E}\left[\sum_{k=1}^K  \sum_{j=1}^M R(j, c_k(g^\star))\cdot\pi^\star(j, k, \{c_{k'}(g^\star)\}_{k'=1}^K)\right]  -\mathbb{E}\left[\sum_{k=1}^K  \sum_{j=1}^M R(j, c_k(g^\star_\epsilon))\cdot\pi^\star(j, k, \{c_{k'}(g^\star)\}_{k'=1}^K)\right] \nonumber  \\ 
    & + \sum_{k=1}^K|g^\star(c_k(g^\star)) -  g^\star_\epsilon(c_k(g^\star_\epsilon))|  \nonumber \\
   \leq  &   \sum_{k=1}^K  \sum_{j=1}^M |\mathbb{E}[R(j, c_k(g^\star)) - R(j, c_k(g^\star_\epsilon)) ]| + \sum_{k=1}^K|g^\star_\epsilon(c_k(g^\star)) -  g^\star_\epsilon(c_k(g^\star_\epsilon))|  + L_2\|c_k(g^\star) - c_k(g^\star_\epsilon)\|_2\nonumber \\
     \leq & KML_1L_3\epsilon + \sum_{k=1}^K 2L_2 L_3\epsilon \nonumber \\ 
     \leq & K(ML_1L_3 + 2L_2L_3)\epsilon.
     \label{eq:g_discretization}
\end{align}

The regret can be decomposed as follows.
\begin{align*}
   &   \mathsf{Regret}_{\mathsf{f}}(T) \\
   \leq   &  KM|\mathcal{G}_\epsilon| + \sum_{t=|\mathcal{G}_\epsilon|+1}^{T} u_{\mathsf{f}}(g^\star, \pi^\star) - u_{\mathsf{f}}(g^\star_\epsilon, \pi^\star_\epsilon)  +  \sum_{t=|\mathcal{G}_\epsilon|+1}^{T} u_{\mathsf{f}}(g^\star_\epsilon, \pi^\star_\epsilon)  - u_{\mathsf{f}}(g_t, \pi_t)     \\ 
   =  & KM|\mathcal{G}_\epsilon| +  K(ML_1L_3 + 2L_2L_3)\epsilon T +  \sum_{t=|\mathcal{G}_\epsilon|+1}^{T}( (u_{\mathsf{f}}(g^\star_\epsilon, \pi^\star_\epsilon) - \hat u_{\mathsf{f}}^{(t)}(g^\star_\epsilon, \pi^\star_\epsilon) ) \\
   & \qquad + ( \hat u^{(t)}_{\mathsf{f}}(g^\star_\epsilon, \pi^\star_\epsilon) - \hat u^{(t)}_{\mathsf{f}}(g_t, \pi_t)) + (\hat u^{(t)}_{\mathsf{f}}( g_t, \pi_t) -   u_{\mathsf{f}}(g_t, \pi_t))) \\ 
   \leq & KM|\mathcal{G}_\epsilon| +  K(ML_1L_3 + 2L_2L_3)\epsilon T +  \sum_{t=|\mathcal{G}_\epsilon|+1}^{T}  (\hat u^{(t)}_{\mathsf{f}}( g_t, \pi_t) -   u_{\mathsf{f}}(g_t, \pi_t)).
\end{align*}
Here the last inequality uses the fact that $u_{\mathsf{f}}(g^\star_\epsilon, \pi^\star_\epsilon) \leq \hat u_{\mathsf{f}}^{(t)}(g^\star_\epsilon, \pi^\star_\epsilon) $  from event $E$. And $\hat u^{(t)}_{\mathsf{f}}(g^\star_\epsilon, \pi^\star_\epsilon) \leq \hat u^{(t)}_{\mathsf{f}}(g_t, \pi_t)$ comes from the fact that $g_t, \pi_t$ is the maximizer. 

Now we proceed to the final sum. We have
\begin{align*}
    & \sum_{t=1}^{T}  (\hat u^{(t)}_{\mathsf{f}}( g_t, \pi_t) -   u_{\mathsf{f}}(g_t, \pi_t)) \\ 
    \leq &  \sum_{t=1}^T \left( \sum_{k=1}^K  \sum_{j=1}^M \hat r^{(t)}(j, c_k(g_t))\cdot\pi_t(j, k, \{c_{k'}(g_t)\}_{k'=1}^K)- \mathbb{E}\left[\sum_{k=1}^K  \sum_{j=1}^M R(j, c_k(g_t))\cdot\pi_t(j, k, \{c_{k'}(g_t)\}_{k'=1}^K)\right] \right)\\
  = &  2\sum_{t=1}^T\left( \sum_{k=1}^K  \sum_{j=1}^M (\hat r^{(t)}(j, c_k(g_t)) - \mathbb{E}[R(j, c_k(g_t))])\cdot\pi_t(j, k, \{c_{k'}(g_t)\}_{k'=1}^K) \right)\\
  \leq  & 2\sum_{t=1}^T  \sum_{k=1}^K  \sum_{j=1}^M \sum_{g\in\mathcal{G}_\epsilon}\sqrt{\frac{ 2d(\mathcal{G})\log(KMT/(\epsilon\delta))}{\sum_{s=1}^{t-1} \mathbbm{1}(\pi_s(j, k)=1,  g_s = g)}} \cdot\pi_t(j, k, \{c_{k'}(g_t)\}_{k'=1}^K)\cdot \mathbbm{1}(g_t = g)  \\ 
   = & 2\sum_{t=1}^T  \sum_{k=1}^K  \sum_{j=1}^M \sum_{g\in\mathcal{G}_\epsilon}\mathbbm{1}(\pi_t(j, k)=1,  g_t = g)\sqrt{\frac{ 2d(\mathcal{G})\log(KMT/(\epsilon\delta))}{\sum_{s=1}^{t-1} \mathbbm{1}(\pi_s(j, k)=1,  g_s = g)}} \\ 
 \leq &  C \sqrt{ { KM|\mathcal{G}_\epsilon|Td(\mathcal{G})\log(KMT/(\epsilon\delta))} }.
\end{align*}
Here the last inequality uses the fact that for any $k, j, g$, the summation always takes the form of $\sum_{l=1}^{T(k, j, g)}1/\sqrt{l}$. 
And thus the regret is upper-bounded as follows.
\begin{align*}
    \mathsf{Regret}_{\mathsf{f}}(T) 
   & \leq C\cdot (KM|\mathcal{G}_\epsilon| +  K(ML_1L_3 + 2L_2L_3)\epsilon T + \sqrt{ { KM|\mathcal{G}_\epsilon|Td(\mathcal{G})\log(KMT/(\epsilon\delta))} }) \\ 
   & \leq  C\cdot \left(KM/{\epsilon}^{d(\mathcal{G})} +  K(ML_1L_3 + 2L_2L_3)\epsilon T + \sqrt{ { KMTd(\mathcal{G})\log(KMT/(\epsilon\delta)) / \epsilon^{d(\mathcal{G})}} }\right).
\end{align*}
By taking $\epsilon = T^{-1/(d(\mathcal{G})+2)}$, we get that with probability at least $1-\delta$, 
\begin{align*}
     \mathsf{Regret}_{\mathsf{f}}(T)\leq C\cdot (KML_1L_3 + KL_2L_3 +  \sqrt{ KMd(\mathcal{G})\log(KMT/\delta)})\cdot T^{(d(\mathcal{G})+1)/(d(\mathcal{G})+2)}.
\end{align*}

\end{proof}

\section{Proof of Proposition~\ref{prop:comp}}\label{proof:comp}

The proof is straightforward from the definitions. Let $\mathcal{R} = \{r_\alpha(c) = \alpha\cdot \mathbb{E}[\sum_{j=1}^M R(j, c)] \mid \alpha\in[0, 1]\}$. From the assumption that $\mathcal{R}\subset \mathcal{G}$,  we have
\begin{align*}
\mathsf{max}_{g\in\mathcal{G}} u_{\mathsf{f}}(g) \geq & \mathsf{max}_{g\in\mathcal{R}} u_{\mathsf{f}}(g) = \max_\alpha (1-\alpha) \mathbb{E}\left[\sum_{k=1}^K  \sum_{j=1}^M R(j, c^{\star}_k)\right], \nonumber \\
    \text{ where } c^{\star}_k& = \argmax_{c\in\mathcal{C}} \alpha \cdot  \mathbb{E}\left[\sum_{j=1}^M R(j, c)\right]- l_k(c).
\end{align*}

This finishes the proof by noting that the last equality is exactly $\max_{\alpha} u_{\mathsf{r}}(\alpha)$.

\section{Proof of Theorem~\ref{thm:full}}\label{proof:full}
\begin{proof}
\textbf{Return-based Contract.}

Let \begin{align}
   u_{\mathsf{r}, k} (\alpha) &= (1-\alpha) \mathbb{E}\left[  \sum_{j=1}^M R(j, c^{\star}_k)\right], \nonumber \\
    \text{ where } c^{\star}_k& = \argmax_{c\in\mathcal{C}} \alpha \cdot  \mathbb{E}\left[\sum_{j=1}^M R(j, c)\right]- l_k(c). 
\end{align}   
Then we know that $  u_{\mathsf{r}}(\alpha) = \sum_{k=1}^K     u_{\mathsf{r}, k} (\alpha).$
We prove the following lemma first.
\begin{lemma}
    For any $0\leq \alpha < \alpha' \leq 1$, we have
    \begin{align*}
        \frac{u_{\mathsf{r}, k}(\alpha)}{1-\alpha} \leq  \frac{u_{\mathsf{r}, k}(\alpha')}{1-\alpha'}.
    \end{align*}
\end{lemma}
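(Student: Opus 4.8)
The plan is to identify the quantity $u_{\mathsf{r},k}(\alpha)/(1-\alpha)$ with the expected total reward generated by the creator's optimal content under share $\alpha$. Writing the shorthand $\bar R(c) := \mathbb{E}\big[\sum_{j=1}^M R(j,c)\big]$ and $c_k^\star(\alpha) \in \argmax_{c\in\mathcal{C}} \alpha\,\bar R(c) - l_k(c)$, the definition of $u_{\mathsf{r},k}$ gives directly $u_{\mathsf{r},k}(\alpha)/(1-\alpha) = \bar R(c_k^\star(\alpha))$. Hence the lemma is equivalent to the comparative-statics statement that $\alpha \mapsto \bar R(c_k^\star(\alpha))$ is nondecreasing: as the creator keeps a larger share of the return, she is incentivized to produce content of (weakly) higher expected reward.

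The key step is a revealed-preference (single-crossing) argument. First I would fix $\alpha < \alpha'$ and let $c = c_k^\star(\alpha)$ and $c' = c_k^\star(\alpha')$ be any corresponding maximizers. Optimality of $c$ at share $\alpha$ and of $c'$ at share $\alpha'$ yields
\begin{align*}
\alpha\,\bar R(c) - l_k(c) &\geq \alpha\,\bar R(c') - l_k(c'), \\
\alpha'\,\bar R(c') - l_k(c') &\geq \alpha'\,\bar R(c) - l_k(c).
\end{align*}
Adding these two inequalities cancels the cost terms $l_k(c)$ and $l_k(c')$ entirely and, after regrouping, leaves $(\alpha' - \alpha)\big(\bar R(c') - \bar R(c)\big) \geq 0$. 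Since $\alpha' - \alpha > 0$ by assumption, this forces $\bar R(c') \geq \bar R(c)$, i.e. $\bar R(c_k^\star(\alpha)) \leq \bar R(c_k^\star(\alpha'))$. Translating back through $\bar R(c_k^\star(\alpha)) = u_{\mathsf{r},k}(\alpha)/(1-\alpha)$ gives exactly the claimed inequality. The only structural fact being used is that $\alpha\,\bar R(c) - l_k(c)$ has increasing differences in the pair $(\alpha, \bar R(c))$, so no convexity or smoothness of $l_k$ or $\bar R$ is required.

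I do not expect a genuine analytic obstacle here; the whole content is the addition-and-cancellation trick above. The one point that deserves a line of care is the possible non-uniqueness of the maximizer defining $c_k^\star(\alpha)$: if ties occur I would simply fix an arbitrary measurable selection for each $\alpha$, and note that the two optimality inequalities hold verbatim for whichever maximizers are selected, so the conclusion is independent of the choice. I would also remark that existence of a maximizer is implicit in the problem setup (the $\argmax$ over $c \in \mathcal{C}$ appears throughout Section~\ref{sec:full}), so I may assume it without further comment.
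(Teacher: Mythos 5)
Your proof is correct and follows essentially the same route as the paper: both arguments add the two revealed-preference inequalities from the optimality of $c_k^\star(\alpha)$ and $c_k^\star(\alpha')$ to cancel the cost terms and conclude $(\alpha'-\alpha)\bigl(\bar R(c_k^\star(\alpha')) - \bar R(c_k^\star(\alpha))\bigr) \geq 0$. If anything, your version is slightly more careful, since summing two weak inequalities yields $\geq 0$ (which suffices for the stated non-strict claim), whereas the paper's write-up asserts a strict inequality at that step.
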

\begin{proof}
    From the optimality of $c_k^\star(\alpha)$, we know that
    \begin{align*}
        \alpha \cdot  \mathbb{E}\left[\sum_{j=1}^M R(j, c_k^\star(\alpha))\right]- l_k(c_k^\star(\alpha)) \geq   \alpha \cdot  \mathbb{E}\left[\sum_{j=1}^M R(j, c_k^\star(\alpha'))\right]- l_k(c_k^\star(\alpha')).
    \end{align*}
On the other hand, from the optimality of of $c_k^\star(\alpha')$, we know that
    \begin{align*}
        \alpha' \cdot  \mathbb{E}\left[\sum_{j=1}^M R(j, c_k^\star(\alpha'))\right]- l_k(c_k^\star(\alpha')) \geq   \alpha' \cdot  \mathbb{E}\left[\sum_{j=1}^M R(j, c_k^\star(\alpha))\right]- l_k(c_k^\star(\alpha)).
    \end{align*}
Now by summing up the above two equations, we arrive at
    \begin{align*}
        (\alpha'-\alpha) \cdot  \left(\mathbb{E}\left[\sum_{j=1}^M R(j, c_k^\star(\alpha'))\right]-  \mathbb{E}\left[\sum_{j=1}^M R(j, c_k^\star(\alpha))\right]\right) >0.
    \end{align*}
This implies that $\mathbb{E}\left[\sum_{j=1}^M R(j, c_k^\star(\alpha'))\right] >  \mathbb{E}\left[\sum_{j=1}^M R(j, c_k^\star(\alpha))\right]$, which is equivalent to  $\frac{u_{\mathsf{r}, k}(\alpha)}{1-\alpha} <  \frac{u_{\mathsf{r}, k}(\alpha')}{1-\alpha'}$.
\end{proof}
The above lemma implies that $u_{\mathsf{r}}$ also satisfies $\frac{u_{\mathsf{r}}(\alpha)}{1-\alpha} <  \frac{u_{\mathsf{r}}(\alpha')}{1-\alpha'}$. The rest of the proof follows the same upper bound argument as~\citet[Theorem 6]{zhu2022sample}, which relies on the right-continuity to design an algorithm based on the uniform discretization. 

\textbf{Feature-based Contract.}
We abuse the notation a bit and let  $u_{\mathsf{f}}(\theta) = u_{\mathsf{f}}(g_\theta) $. Let  
\begin{align*}
u_{\mathsf{f}, k}(\theta) &  =  \left(\sum_{j=1}^M u_j - \theta\right)^\top  c_k^\star,  \\
\text{ where } c^{\star}_k& = \argmax_{c\in\mathcal{C}} \theta^\top c - l_k(c).  
\end{align*}

We prove the following lemma:
\begin{lemma}\label{lem:gen_inequality}
For any two contracts $\theta, \theta +\gamma$ and any fixed $i$, one has 
\begin{align*}
      (c_k^\star(\theta+\gamma) - c_k^\star(\theta) )^\top \cdot \gamma \geq  0.
\end{align*}
\end{lemma}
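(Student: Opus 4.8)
The plan is to prove this by a revealed-preference (two-point exchange) argument, exactly parallel to the monotonicity lemma already established above for $u_{\mathsf{r},k}$. Write $c \coloneqq c_k^\star(\theta)$ and $c' \coloneqq c_k^\star(\theta+\gamma)$, so that by definition $c$ maximizes $\theta^\top x - l_k(x)$ over $x\in\mathcal{C}$ and $c'$ maximizes $(\theta+\gamma)^\top x - l_k(x)$ over $x\in\mathcal{C}$. The whole proof is then just the act of writing down the two optimality inequalities and adding them.

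Concretely, first I would record optimality of $c$ against the competitor $c'$,
\begin{align*}
    \theta^\top c - l_k(c) \geq \theta^\top c' - l_k(c'),
\end{align*}
and optimality of $c'$ against the competitor $c$,
\begin{align*}
    (\theta+\gamma)^\top c' - l_k(c') \geq (\theta+\gamma)^\top c - l_k(c).
\end{align*}
Next I would add these two inequalities. The cost terms $-l_k(c)$ and $-l_k(c')$ appear with the same sign on both sides and cancel, which is the crucial structural feature: it means we never invoke any regularity assumption (convexity, differentiability, boundedness) on the cost functions $l_k$. After that cancellation the pure-$\theta$ terms $\theta^\top c$ and $\theta^\top c'$ also appear on both sides and cancel, leaving only $\gamma^\top c' \geq \gamma^\top c$, i.e.
\begin{align*}
    \left(c_k^\star(\theta+\gamma) - c_k^\star(\theta)\right)^\top \gamma \geq 0,
\end{align*}
which is exactly the claim.

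The only point that warrants a word of care is the well-definedness of the $\argmax$ defining $c_k^\star(\cdot)$: in principle the maximizer may fail to be unique, so $c_k^\star(\theta)$ should be read as any fixed selection of a maximizer. Since the inequalities above hold for \emph{any} maximizers $c, c'$ and the cost terms cancel identically regardless of which selection is chosen, the conclusion is unaffected by ties, so no tie-breaking convention is needed. I therefore expect no genuine obstacle in this lemma; its entire content is the discrete (cyclic) monotonicity of the induced content map $\theta \mapsto c_k^\star(\theta)$, which is what will later let us control how the optimal content moves as the feature-based contract $\theta$ varies and thereby drive the regret analysis for the linear family in Theorem~\ref{thm:full}.
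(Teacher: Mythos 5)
Your proof is correct and is essentially identical to the paper's: both write down the two optimality inequalities for $c_k^\star(\theta)$ and $c_k^\star(\theta+\gamma)$ and add them, letting the cost terms and the pure-$\theta$ terms cancel. Your extra remark about ties in the $\argmax$ is a sensible clarification but does not change the argument.
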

\begin{proof}
Recall that $c^\star_k(\theta)$ is the optimal action under contract $\theta$. Thus one has 
\begin{align*}
    \theta^\top c^\star_k(\theta) - l_k( c^\star_k(\theta)) \geq
    \theta^\top c^\star_k(\theta+\gamma) - l_k( c^\star_k(\theta+\gamma)) .
\end{align*} 
Similarly, from that $c^\star_k(\theta+\gamma)$ is the optimal action under contract $\theta+\gamma$, we have
\begin{align*}
    (\theta+\gamma)^\top c^\star_k(\theta+\gamma) - l_k( c^\star_k(\theta+\gamma)) \geq
    (\theta+\gamma)^\top c^\star_k(\theta) - l_k( c^\star_k(\theta)) .
\end{align*} 
Summing up the above two equations, we arrive at the desired result. 
\end{proof}

With the help of Lemma~\ref{lem:gen_inequality}, we can show that under some specific direction, the utility is approximately continuous.
\begin{lemma}[Continuity of the utility function]\label{lem:gen_continuity}
Let $\gamma = \alpha(\sum_{j=1}^M u_j-\theta) +  \eta$ for some $\alpha\in(0, 1]$. For any $\theta\in\mathbb{B}^d$, we have 
\begin{align*}
    u_{\mathsf{f}, k}(\theta) - u_{\mathsf{f}, k}(\theta+\gamma) \leq 2\left( \|\vec\gamma\|_2 + \frac{\|\eta\|_2}{\alpha}\right).
\end{align*}
\end{lemma}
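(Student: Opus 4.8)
The plan is to expand the difference directly in terms of the two optimal contents $c \coloneqq c_k^\star(\theta)$ and $c' \coloneqq c_k^\star(\theta+\gamma)$, and then to exploit the structural inequality from Lemma~\ref{lem:gen_inequality}. Writing $U \coloneqq \sum_{j=1}^M u_j$ for brevity, I would first record the algebraic identity
\begin{align*}
u_{\mathsf{f}, k}(\theta) - u_{\mathsf{f}, k}(\theta+\gamma) = (U-\theta)^\top c - (U-\theta-\gamma)^\top c' = (U-\theta)^\top(c - c') + \gamma^\top c',
\end{align*}
which isolates a ``displacement'' term $(U-\theta)^\top(c-c')$ and a ``direct'' term $\gamma^\top c'$.

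The key step is to rewrite the displacement term using the defining relation $\gamma = \alpha(U-\theta) + \eta$, equivalently $U - \theta = \frac{1}{\alpha}(\gamma - \eta)$. Substituting yields
\begin{align*}
(U-\theta)^\top(c-c') = \frac{1}{\alpha}\gamma^\top(c - c') - \frac{1}{\alpha}\eta^\top(c-c').
\end{align*}
Here Lemma~\ref{lem:gen_inequality}, applied to the pair $\theta,\,\theta+\gamma$, gives $(c' - c)^\top\gamma \geq 0$, so $\gamma^\top(c-c') \leq 0$ and the first summand is non-positive (using $\alpha > 0$). This is the crucial cancellation: the monotonicity-type inequality annihilates exactly the part of the displacement that is aligned with $\gamma$.

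It then remains to bound the two leftover terms by norms. For the $\eta$-piece I would apply Cauchy--Schwarz together with $\|c-c'\|_2 \leq \|c\|_2 + \|c'\|_2 \leq 2$, valid since $c, c' \in \mathcal{C} \subset \mathbb{B}^d$, obtaining $-\frac{1}{\alpha}\eta^\top(c-c') \leq \frac{2\|\eta\|_2}{\alpha}$; for the direct term, Cauchy--Schwarz with $\|c'\|_2 \leq 1$ gives $\gamma^\top c' \leq \|\gamma\|_2$. Summing these produces the bound $\|\gamma\|_2 + \frac{2\|\eta\|_2}{\alpha}$, which is dominated by the claimed $2\left(\|\gamma\|_2 + \frac{\|\eta\|_2}{\alpha}\right)$.

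I do not anticipate a genuine obstacle: once the decomposition is in place the argument is a short computation, and the single conceptual ingredient is recognizing that the prescribed form of $\gamma$ is engineered so that $U-\theta$ is proportional to $\gamma$ up to the error $\eta$, which is precisely what lets Lemma~\ref{lem:gen_inequality} absorb the dominant displacement term. The mildly delicate point is bookkeeping which contributions are controlled by $\|\gamma\|_2$ versus the amplified quantity $\|\eta\|_2/\alpha$; the $1/\alpha$ factor is intrinsic, since the displacement is measured in the $(U-\theta)$ direction while the lemma only controls the $\gamma$ direction, and these differ by the factor $\alpha$ up to $\eta$.
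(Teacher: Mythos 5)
Your proof is correct and is essentially the paper's own argument: both decompose the utility difference into the displacement term $(U-\theta)^\top(c-c')$ plus a direct term controlled by $\|\gamma\|_2$, both use Lemma~\ref{lem:gen_inequality} to cancel the $\gamma$-aligned part of the displacement (your substitution $U-\theta=\tfrac{1}{\alpha}(\gamma-\eta)$ is just the paper's rearrangement written explicitly), and both finish with Cauchy--Schwarz and $\|c-c'\|_2\leq 2$ to reach $\|\gamma\|_2+2\|\eta\|_2/\alpha$.
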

\begin{proof}
From Lemma~\ref{lem:gen_inequality}, we know that for $\gamma = \alpha(\sum_{j=1}^M u_j-\theta) +  \eta$, we have 
\begin{align*}
      (c_k^\star(\theta+\gamma) - c_k^\star(\theta) )^\top   \gamma \geq  0.
\end{align*}
Rearranging the above formula gives us  that 
\begin{align*}
     (c_k^\star(\theta) - c_k^\star(\theta+\gamma) )^\top \left(\sum_{j=1}^M u_j-\theta\right) \leq   (c_k^\star(\theta+\gamma) - c_k^\star(\theta))^\top \eta / \alpha  \leq 2\|\eta\|_2 / \alpha.
\end{align*}
Thus we have
\begin{align}
     u_{\mathsf{f}, k}(\theta) - u_{\mathsf{f}, k}(\theta+\gamma) & = \left(\sum_{j=1}^M u_j - \theta\right)^\top  c_k^\star(\theta) - \left(\sum_{j=1}^M u_j - \theta-\gamma\right)^\top  c_k^\star(\theta+\gamma)\\
     & \leq \left(\sum_{j=1}^M u_j - \theta\right)^\top  c_k^\star(\theta) - \left(\sum_{j=1}^M u_j - \theta \right)^\top  c_k^\star(\theta+\gamma) + \|\gamma\|_2 \\
     & \leq 2\|\eta\|_2/\alpha + \|\gamma\|_2.
\end{align} 
This finishes the proof of the lemma.
\end{proof}
The rest of the proof follows directly~\citet[Theorem 4]{zhu2022sample}, which applies a discretization based on spherical code to achieve rate $\mathcal{O}(T^{2d/(2d+1)})$. 
\end{proof}

\end{document}